\newcommand{\longversion}[1]{#1}
\newcommand{\shortversion}[1]{}
\newcommand{\longshort}[2]{\longversion{#1}\shortversion{#2}}
\newcommand{\SB}{\{\,}%
\newcommand{\SM}{\;{:}\;}%
\newcommand{\SE}{\,\}}%
\newtheorem{definition}{Definition}
\newtheorem{lemma}{Lemma} 
\newtheorem{theorem}{Theorem}
\newtheorem{myrule}{Rule}
\newtheorem{claim}{Claim}
\newcommand{\hy}{\hbox{-}\nobreak\hskip0pt}
\newcommand{\Card}[1]{|#1|}
 \let\AAA\cA
\newcommand{\cC}{\mathcal{C}} \let\CCC\cC
 \let\GGG\cG
\newcommand{\cO}{\mathcal{O}}
\newcommand{\cB}{\mathcal{B}}
\newcommand{\TW}{\mathcal{W}}
\newcommand{\NP}{\text{\normalfont NP}}
\newcommand{\set}[1]{\left\{ #1 \right\}}
\newcommand{\myiff}{iff\xspace}
\newcommand{\ie}{i.e.\xspace}
\newcommand{\etal}{\emph{et al.}\xspace}
\newcommand{\var}{{\normalfont \textsf{var}}}
\newcommand{\cla}{{\normalfont \textsf{cla}}}
\newcommand{\lit}{{\normalfont \textsf{lit}}}
\newcommand{\true}{1} 
\newcommand{\false}{0} 
\newcommand{\inc}{{\normalfont \textsf{inc}}}
\newcommand{\tw}{{\normalfont \textsf{tw}}}
\renewcommand{\sb}{{\normalfont \textsf{sb}}}
\newcommand{\class}[1]{\text{\text{\normalfont\sc  #1}}}
\newcommand{\RHorn}{\class{RHorn}}
\newcommand{\Treewidth}{\ensuremath{\TW_{\leq t}}\xspace}
\newcommand{\BDS}{backdoor set\xspace}
\newcommand{\BDSs}{backdoor sets\xspace}
\newcommand{\TBDS}{\Treewidth{}\hy \BDS}
\newcommand{\TBDSs}{\Treewidth{}\hy \BDSs}
\newcommand{\STBDS}{strong \TBDS}
\newcommand{\STBDSs}{strong \TBDSs}
\newcommand{\Wobs}{wall-ob\-struc\-tion\xspace}
\newcommand{\Wobss}{wall-ob\-struc\-tions\xspace}
\newcommand{\obstemp}{obstruction-template\xspace}
\newcommand{\obstemps}{obstruction-templates\xspace}
\newcommand{\OT}{\mathsf{OT}}
\newcommand{\ktw}{\mathsf{tw}(k,t)}
\newcommand{\kwall}{\mathsf{wall}(k,t)}
\newcommand{\kobs}{\mathsf{obs}(k,t)}
\newcommand{\ksame}{\mathsf{same}(k,t)}
\newcommand{\tnb}{\mathsf{nb}(t)}
\title{Strong Backdoors to Bounded Treewidth SAT %
\thanks{Research supported by the European Research Council (ERC), project COMPLEX REASON 239962.}}
\author{%
Serge Gaspers \and
Stefan Szeider}
\date{%
\normalsize Institute of Information Systems\\ Vienna University of Technology\\ Vienna, Austria.\\
\texttt{gaspers@kr.tuwien.ac.at}\\ \texttt{stefan@szeider.net}
}
\begin{document}

\maketitle

 \thispagestyle{empty}

\begin{abstract}
  There are various approaches to exploiting ``hidden structure'' in
  instances of hard combinatorial problems to allow faster algorithms
  than for general unstructured or random instances. For SAT and its
  counting version \#SAT, hidden structure has been exploited in terms
  of decomposability  and strong backdoor sets.
    Decomposability can be considered in terms of the treewidth of
  a graph that is associated with the given CNF formula, for instance
  by considering clauses and variables as vertices of the graph, and
  making a variable adjacent with all the clauses it appears in.
  On the other hand, a strong backdoor set of a CNF formula is a set of
  variables such that each possible partial assignment to this set
  moves the formula into a fixed class  for which (\#)SAT
  can be solved in polynomial time.

  In this paper we combine the two above approaches. In particular, we
  study the algorithmic question of finding a small strong backdoor
  set into the class $\TW_{\leq t}$ of CNF formulas whose associated
  graphs have treewidth at most~$t$. The main results are positive:

  \begin{enumerate}
  \item[(1)] There is a cubic-time
    algorithm that, given a CNF formula $F$ and two constants $k,t\ge 0$, either finds a strong
    $\TW_{\leq t}$\hy backdoor set of size at most $2^k$, or concludes
    that $F$ has no strong $\TW_{\leq t}$\hy backdoor set of
    size at most~$k$.

  \item[(2)] There is a cubic-time algorithm that,
      given a CNF formula $F$,
      computes the number of satisfying assignments of $F$
      or concludes that $\sb_t(F)>k$, for any pair of constants $k,t\ge 0$.
      Here, $\sb_t(F)$ denotes the size of a smallest strong $\TW_{\leq t}$\hy backdoor
      set of~$F$.
  \end{enumerate}
  We establish both results by distinguishing between two cases,
  depending on whether the treewidth of the given formula is small or
  large. For both results the case of small treewidth
  can be dealt with relatively standard  methods.
  The case of large treewidth is
  challenging and requires new and sophisticated combinatorial
  arguments. The main tool is an auxiliary graph
  whose vertices represent subgraphs in $F$'s associated graph.
  It captures various ways to assemble
  large-treewidth subgraphs in $F$'s associated graph.
  This is used to show that every backdoor set of size $k$
  intersects a certain set of variables whose size is bounded
  by a function of $k$ and $t$. For any other set of $k$ variables,
  one can use the auxiliary graph to find an assignment $\tau$ to these variables
  such that the graph associated with $F[\tau]$ 
  has treewidth at least $t+1$.

  The significance of our results lies in the fact that they allow us
  to exploit algorithmically a hidden structure in formulas that is
  not accessible by any one of the two approaches (decomposability,
  backdoors) alone.  Already a backdoor size 1 on top of treewidth 1
  (i.e., $\sb_1(F)=1$) entails formulas of arbitrarily large treewidth
  and arbitrarily large cycle cutsets (variables that need to be deleted to
  make the instance acyclic).
  
  \medskip
  \noindent
  \emph{Keywords:} algorithms, \#SAT, parameterized complexity, graph minors
\end{abstract}

\newpage
\setcounter{page}{1}
\section{Introduction}

\paragraph{Background.}
Satisfiability (SAT) is probably one of the most important NP-complete
problems~\cite{Cook71,Levin73}. Despite the theoretical intractability of SAT,
heuristic algorithms work surprisingly fast on real-world SAT instances. A
common explanation for this discrepancy between theoretical hardness and
practical feasibility is the presence of a certain ``hidden structure''
in industrial SAT instances~\cite{GomesKautzSabharwalSelman08}. There are
various approaches to capturing the vague notion of a ``hidden
structure'' with a mathematical concept.

One widely studied approach is to consider the hidden structure in terms
of \emph{decomposability}. The basic idea is to decompose a SAT instance
into small parts that can be solved individually, and to put solutions
for the parts together to a global solution. The overall complexity
depends only on the maximum overlap of the parts, the \emph{width} of
the decomposition. Treewidth and branchwidth are two decomposition width
measures (related by a constant factor) that have been applied to
satisfiability. The width measures are either applied in terms of the
\emph{primal graph} of the formula (variables are vertices, two
variables are adjacent if they appear together in a clause) or in terms
of the \emph{incidence graph} (a bipartite graph on the variables and
clauses, a clause is incident to all the variables it contains).  If the
treewidth or branchwidth of any of the two graphs is bounded, then SAT
can be decided in polynomial time; in fact, one can even count the
number of satisfying assignments in polynomial time. This result has
been obtained in various contexts, e.g., resolution
complexity~\cite{AlekhnovichRazborov02} and Bayesian Inference~\cite{BacchusDalmaoPitassi03} (branchwidth of primal graphs), and Model
Checking for Monadic Second-Order Logic~\cite{FischerMakowskyRavve06}
(treewidth of incidence graphs).

A complementary approach is to consider the hidden structure of a SAT
instance in terms of a small set of key variables, called \emph{backdoor
  set}, that when instantiated moves the instance into a polynomial
class. More precisely, a \emph{strong} backdoor set of a CNF formula $F$
into a polynomially solvable class $\CCC$ (or strong $\CCC$\hy backdoor
set, for short) is a set $B$ of variables such that for all partial
assignments $\tau$ to $B$, the reduced formula $F[\tau]$ belongs
to~$\CCC$ (\emph{weak} backdoor sets apply only to satisfiable formulas and
will not be considered in this paper).  Backdoor sets where introduced
by Williams \etal~\cite{WilliamsGomesSelman03} to explain favorable
running times and the heavy-tailed behavior of SAT and CSP solvers on
practical instances. In fact, real-world instances tend to have small
backdoor sets (see \cite{LiB11} and references). Of special interest are base classes for which we can
find a small backdoor set efficiently, if one exists. This is the case,
for instance, for the base classes based on the tractable cases in
Schaefer's dichotomy theorem~\cite{Schaefer78}. In fact, for any
constant $b$ one can decide in linear time whether a given CNF formula
admits a backdoor set of size $b$ into any Schaefer
class~\cite{GaspersSzeider11b}.

\paragraph{Contribution.}
In this paper we combine the two above approaches. In particular, we
study the algorithmic question of finding a small strong backdoor set
into a class of formulas of bounded treewidth. Let $\TW_{\leq t}$ denote
the class of CNF formulas whose incidence graph has treewidth at
most~$t$.  Since SAT and \#SAT can be solved in linear time for formulas
in $\TW_{\leq t}$ \cite{FischerMakowskyRavve06,SamerSzeider10}, we can 
also solve these problems efficiently for a formula $F$ if we know a strong
$\TW_{\leq t}$\hy backdoor set of $F$ of small size $k$. We simply take the sum
of the satisfying assignments over all $2^k$ reduced formulas that we
obtain by applying partial truth assignments to a backdoor set of
size~$k$.

However, finding a small strong backdoor set into a class $\TW_{\leq t}$ is
a challenging problem. What makes the problem difficult is that applying
partial assignments to variables is a much more powerful operation than
just deleting the variables from the formula, as setting a variable to
true may remove a large set of clauses, setting it to false removes a
different set of clauses, and for a strong backdoor set $B$ we must
ensure that for all the $2^{\Card{B}}$ possible assignments the
resulting formula is in $\TW_{\leq t}$. The brute force algorithm
tries out all possible sets $B$ of at most $k$ variables, and checks for
each set whether all the $2^{|B|}$ reduced formulas belong to $\TW_{\leq
  t}$. The number of membership checks is of order $2^kn^k$ for an input
formula with $n$ variables. This number is polynomial for constant $k$,
but the order of the polynomial depends on the backdoor size~$k$.  Is it
possible to get $k$ out of the exponent and to have the same polynomial
for every fixed $k$ and $t$? Our main result provides an affirmative
answer to this question. We show the following.
\begin{theorem}\label{the:main}
  There is a cubic-time algorithm
  that, given a CNF formula $F$ and two constants $k,t\ge 0$, either finds a strong $\TW_{\leq t}$\hy
  backdoor set of size at most $2^k$, or concludes that $F$ has no
  strong $\TW_{\leq t}$\hy backdoor set of size at most~$k$.
\end{theorem}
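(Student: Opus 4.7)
The plan is to follow the treewidth dichotomy sketched in the abstract. First, compute an approximate tree decomposition of the incidence graph $G_F$ of $F$ in time $\cO(n^3)$ using Bodlaender's algorithm or a constant-factor approximation, and pick a threshold $f(k,t)$ calibrated to the obstruction analysis below. If $\tw(G_F) \leq f(k,t)$, we are in the \emph{small-treewidth} case; otherwise in the \emph{large-treewidth} case. In both cases we aim to either produce a strong $\TW_{\leq t}$-backdoor set of size at most $2^k$ or certify that none of size $\leq k$ exists.

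The small-treewidth case is routine. The property ``$B\subseteq \var(F)$ with $\Card{B}\leq k$ is a strong $\TW_{\leq t}$-backdoor set'' can be encoded in a constant-size MSO formula on $G_F$: existentially quantify the $\leq k$ variables of $B$, and for each of the $2^k$ partial assignments $\tau$ to $B$ enforce that the reduced incidence graph $G_{F[\tau]}$ excludes all minors witnessing treewidth $>t$. Since $k$ and $t$ are constants, Courcelle's theorem evaluates the formula and extracts a witness in linear time on the bounded-width decomposition; a direct dynamic-programming implementation would work equally well.

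The large-treewidth case is the main technical contribution. The core combinatorial claim, to be established via the auxiliary-graph argument previewed in the abstract, is that one can compute in polynomial time a set $X\subseteq \var(F)$ of size bounded by some function $g(k,t)$ such that every strong $\TW_{\leq t}$-backdoor set of size $\leq k$ meets $X$. I would prove this by invoking the Excluded Grid Theorem to extract a large wall or grid minor in $G_F$, then building the auxiliary graph whose vertices are small ``obstruction subgraphs'' and whose edges record how such obstructions can be glued along shared variables into structures whose treewidth resists reduction by partial assignments. Whenever a candidate $S \subseteq \var(F)$ with $\Card{S}\leq k$ avoids $X$, the auxiliary graph yields an explicit assignment $\tau : S \to \{0,1\}$ with $\tw(G_{F[\tau]}) \geq t+1$, certifying that $S$ is not a backdoor. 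Armed with $X$, the algorithm branches: for each $v\in X$, place $v$ into the backdoor and recurse with parameter $k-1$ on both $F[v=0]$ and $F[v=1]$, returning $\{v\}\cup B_0\cup B_1$ where $B_i$ is produced by the recursive call on $F[v=i]$. The recurrence $T(k)\leq 1+2T(k-1)$, $T(0)=0$ gives $T(k)\leq 2^k-1$, which matches the claimed size bound; the branching tree has at most $(2g(k,t))^k = \cO(1)$ leaves for constant $k,t$, and each leaf invokes the small-treewidth subroutine, so the total time remains $\cO(n^3)$.

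The hardest step is establishing the bounded set $X$. A truth assignment acts on the incidence graph in two ways simultaneously — setting a variable to $1$ deletes the clauses containing the positive literal, while setting it to $0$ deletes the clauses containing the negative literal — and \emph{both} branches must be blocked by surviving obstructions before we may insist that a backdoor variable lie in $X$. Designing the auxiliary graph so that it both captures enough ways of composing obstructions to rule out every candidate backdoor disjoint from $X$, and keeps $\Card{X}$ bounded by a function of $k$ and $t$, is precisely where the new graph-minor-style combinatorics are required.
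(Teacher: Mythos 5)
Your proposal follows exactly the paper's two-case dichotomy: Courcelle/MSO model checking over the forbidden-minor characterization of $\TW_{\leq t}$ in the small-treewidth case, and in the large-treewidth case a branching algorithm driven by a bounded set $S^*$ that every size-$\leq k$ strong backdoor must intersect, with the same size recurrence $s(k)\leq 1+2s(k-1)$ giving $2^k-1$. The substantive combinatorics you leave as a sketch (extracting a wall, building the obstruction-template auxiliary graph, and showing any size-$k$ set disjoint from $S^*$ admits an assignment that preserves treewidth $\geq t+1$) is indeed where the paper's Lemma~\ref{lem:wall} does the real work, so at the level of detail you give, your outline matches the paper's proof.
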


\noindent
Our algorithm distinguishes for a given CNF formula between two cases:
(A)~the formula has small treewidth, or (B)~the formula has large
treewidth. In Case~A we use model checking for monadic second order
logic~\cite{ArnborgLagergrenSeese91} to find a smallest backdoor set.
Roberson and Seymour's theory of graph minors~\cite{RobertsonSeymour85}
guarantees a finite set of forbidden minors for every minor-closed class of
graphs. Although their proof is non-constructive, for the special case of
bounded treewidth graphs the forbidden minors can be computed in constant time
\cite{AdlerGK08,Lagergren98}. These forbidden minors are used in our monadic
second order sentence to describe a strong backdoor set to the base class $\TW_{\leq t}$.
A model checking algorithm \cite{ArnborgLagergrenSeese91} then computes
a strong $\TW_{\leq t}$\hy backdoor set of size~$k$ if one exists.

In Case~B we use a theorem by
Robertson and Seymour~\cite{RobertsonSeymour86b}, guaranteeing a large
wall as a topological minor, to find many
vertex-disjoint obstructions in the incidence graph, so-called \Wobss.
A backdoor set needs to
``kill'' all these obstructions, where an obstruction is killed either internally
because it contains a backdoor variable, or externally because it contains two clauses
containing the same backdoor variable with opposite signs.
Our main combinatorial tool is the \obstemp, a bipartite graph
with external killers on one side and vertices representing vertex-disjoint connected subgraphs
of a \Wobs on the other side of the bipartition.
It is used to guarantee that for sets of $k$ variables excluding a bounded set of variables,
every assignment to these $k$ variables produces a formula whose incidence graph
has treewidth at least $t+1$.

Combining both cases leads to an algorithm producing a
strong $\TW_{\leq t}$\hy backdoor set of a given formula $F$ of size at
most $2^k$ if $F$ has a strong $\TW_{\leq t}$\hy backdoor set of
size~$k$. 

For our main applications of
Theorem~\ref{the:main}, the problems SAT and \#SAT, we can
solve Case~A actually without recurring to the list of forbidden minors of bounded treewidth graphs
and to model checking for monadic second order
logic. Namely, when
the treewidth of the incidence graph is small, we can directly
apply one of the known linear-time algorithms to count the number of
satisfying truth
assignments~\cite{FischerMakowskyRavve06,SamerSzeider10}, thus 
avoiding the issue of finding a backdoor set.

We arrive at the following statement where 
$\sb_t(F)$ denotes the size of a smallest strong $\TW_{\leq t}$\hy backdoor
set of a formula~$F$.
\begin{theorem}\label{the:counting}
 There is a cubic-time algorithm that, given a CNF formula $F$,
 computes the number of satisfying assignments of $F$
 or concludes that $\sb_t(F)>k$, for any pair of constants $k,t\ge 0$.
\end{theorem}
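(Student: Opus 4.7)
The plan is to prove Theorem~\ref{the:counting} by the same dichotomy used for Theorem~\ref{the:main}, but to shortcut the small-treewidth case: instead of invoking MSO model checking to look for a backdoor set, we will directly count satisfying assignments whenever the incidence graph already has bounded treewidth.

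First, I would run a constant-parameter treewidth algorithm on the incidence graph of $F$ (Bodlaender's cubic-time algorithm, or any FPT approximation that is cubic in $|F|$), with threshold value $w = w(k,t)$ chosen large enough so that whenever the incidence treewidth exceeds $w$, the large-treewidth machinery behind Theorem~\ref{the:main} applies. This step either produces a tree decomposition of width at most $w(k,t)$ or certifies that the incidence treewidth exceeds $w(k,t)$.

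\textbf{Case A (small treewidth).} If we get a tree decomposition of width $\leq w(k,t)$, feed it directly into the known linear-time \#SAT algorithm on bounded incidence treewidth formulas~\cite{SamerSzeider10,FischerMakowskyRavve06}. This yields the number of satisfying assignments of $F$ without ever computing a backdoor set and entirely bypasses the MSO ingredient of Theorem~\ref{the:main}'s Case~A.

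\textbf{Case B (large treewidth).} If the treewidth is large, invoke the Case~B subroutine of Theorem~\ref{the:main}. It either produces a strong $\TW_{\leq t}$\hy backdoor set $B$ with $|B| \leq 2^k$, or certifies that $\sb_t(F) > k$. In the latter case we output $\sb_t(F) > k$ and halt. Otherwise, we enumerate all $2^{|B|} \leq 2^{2^k}$ partial assignments $\tau : B \to \{0,1\}$; by the backdoor property each $F[\tau]$ lies in $\TW_{\leq t}$, so we can count $\#F[\tau]$ in linear time using the bounded-treewidth \#SAT algorithm, and then return $\#F = \sum_\tau \#F[\tau]$. Since every satisfying assignment of $F$ restricts to a unique partial assignment of $B$, the total is correct.

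\textbf{Where the difficulty sits.} Once Theorem~\ref{the:main} is in hand, the main conceptual subtlety is choosing the threshold $w(k,t)$ consistently with the threshold used in Theorem~\ref{the:main}'s case split, so that Case~B's obstruction-template analysis applies whenever Case~A is not taken. The running time is clearly cubic: treewidth detection is cubic, the counting procedure in Case~A is linear, and in Case~B the dominant cost is the cubic-time backdoor procedure from Theorem~\ref{the:main}, followed by only $O(2^{2^k})$ linear-time counting calls, which is a constant factor for fixed $k$ and $t$. The only real obstacle, therefore, is that of Theorem~\ref{the:main}'s Case~B itself; Theorem~\ref{the:counting} piggybacks on it.
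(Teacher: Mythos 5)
Your approach matches the paper's proof of Theorem~\ref{the:counting} step for step: a treewidth test using Bodlaender's algorithm, direct counting via the bounded-treewidth \#SAT algorithm in the small-treewidth case, and a fallback to the Case~B subroutine of Theorem~\ref{the:main} followed by per-assignment counting in the large-treewidth case. The threshold you call $w(k,t)$ is exactly the function $\ktw$ the paper uses.

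There is, however, a concrete error in your final summation. You claim $\#F = \sum_{\tau} \#F[\tau]$ on the grounds that every satisfying assignment of $F$ restricts to a unique $\tau$ on $B$. That fact is true but does not establish the identity: the quantity $\#F[\tau]$ counts assignments only over $\var(F[\tau])$, while a satisfying assignment of $F$ also freely assigns the variables in $\var(F)\setminus(B \cup \var(F[\tau]))$, namely those variables that appear only in clauses already satisfied by $\tau$ and therefore vanish from $F[\tau]$ uninstantiated. For a small example, take $F = (x \lor y)$ and $B = \{x\}$: then $F[x{=}1]$ is the empty formula with no variables, so $\#F[x{=}1] = 1$, and $\#F[x{=}0] = (y)$, so $\#F[x{=}0] = 1$; your sum gives $2$, but $\#F = 3$. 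The paper corrects for this by weighting the sum,
\[
  \#F \;=\; \sum_{\tau \in 2^B} 2^{\,d(F,\tau)} \,\#F[\tau],
  \qquad
  d(F,\tau) = \bigl|\var(F)\setminus\bigl(B \cup \var(F[\tau])\bigr)\bigr|,
\]
where $d(F,\tau)$ counts the variables that disappear without being instantiated. With this correction, the rest of your argument (the reduction to Theorem~\ref{the:main}'s Case~B and the cubic running-time analysis) is sound.
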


\noindent
This is a \emph{robust} algorithm in the sense of \cite{Spinrad03}
since for every instance, it either solves the problem (SAT, \#SAT)
or concludes that the instance is not in the class of instances that
needs to be solved (the CNF formulas $F$ with $\sb_t(F)\le k$).
In general, a robust algorithm solves the problem on a superclass
of those instances that need to be solved, and it does not necessarily
check whether the given instance is in this class.


Theorem~\ref{the:counting} applies to formulas of \emph{arbitrarily large treewidth}.
We would like to illustrate this with the following example.
Take a CNF formula $F_n$ whose incidence graph is obtained from an
$n\times n$ square grid containing all the variables of $F_n$ by subdividing
each edge by a clause of $F_n$.
It is well-known that the $n\times n$ grid, $n\ge 2$, has treewidth
$n$ and that a subdivision of an edge does not decrease the treewidth.
Hence $F_n\notin \TW_{\leq n-1}$. Now take a new
variable $x$ and add it positively to all horizontal clauses and negatively to
all vertical clauses. Here, a \emph{horizontal} (respectively, a \emph{vertical}) clause
is one that subdivides a horizontal (respectively, a vertical) edge in the natural
layout of the grid. Let $F_n^x$ denote the new formula. Since the
incidence graph of $F_n$ is a subgraph of the incidence graph of
$F_n^x$, we have $F_n^x\notin \TW_{\leq n-1}$. However, setting~$x$ to
true removes all horizontal clauses and thus yields a formula whose incidence
graph is acyclic, hence $F_n^x[x=\text{true}]\in \TW_{\leq 1}$.
Similarly, setting~$x$ to false yields a formula
$F_n^x[x=\text{false}]\in \TW_{\leq 1}$. Hence $\{x\}$ forms a strong
$\TW_{\leq 1}$\hy backdoor set of $F_n^x$. Conversely, it is easy to
construct, for every $t\ge 0$, formulas that belong to $\TW_{\leq t+1}$ but
require arbitrarily large strong $\TW_{\leq t}$\hy backdoor sets.

One can also define a \emph{deletion $\CCC$\hy backdoor set} $B$ of a
CNF formula $F$ by requiring that deleting all literals $x,\neg x$ with
$x\in B$ from~$F$ produces a formula that belongs to the base
class~\cite{NishimuraRagdeSzeider07}. For many base classes it holds
that every deletion backdoor set is a strong backdoor set, but in most
cases, including the base class $\TW_{\leq t}$, the reverse is not
true. In fact, it is easy to see that if a CNF formula $F$ has a
deletion $\TW_{\leq t}$\hy backdoor set of size $k$, then $F\in
\TW_{t+k}$. In other words, the parameter ``size of a smallest deletion
$\TW_{\leq t}$\hy backdoor set'' is dominated by the parameter
``treewidth of the incidence graph'' and therefore of limited theoretical
interest, except for reducing the space requirements of dynamic
programming procedures~\cite{BidyukDechter07} and analyzing the
effectiveness of polynomial time preprocessing~\cite{CyganLPPS11}.

A common approach to solve \#SAT is to find a small \emph{cycle cutset}
(or feedback vertex set) of variables of the given CNF formula, and by
summing up the number of satisfying assignments of all the acyclic
instances one gets by setting the cutset variables in all possible
ways~\cite{Dechter03}.  We would like to note that such a cycle cutset
is nothing but a deletion $\TW_{\leq 1}$\hy backdoor set. By considering
strong $\TW_{\leq 1}$\hy backdoor sets instead, one can get
super-exponentially smaller sets of variables, and hence a more powerful
method.  A strong $\TW_{\leq 1}$\hy backdoor set can be considered as a
an \emph{implied cycle cutset} as it cuts cycles by removing clauses
that are satisfied by certain truth assignments to the backdoor
variables. By increasing the treewidth bound from $1$ to some fixed
$t>1$ one can further dramatically decrease the size of a smallest
backdoor set.

Our results can also be phrased in terms of \emph{Parameterized
  Complexity} \cite{FlumGrohe06}.  Theorem~\ref{the:counting} states
that \#SAT is uniformly fixed-parameter tractable (FPT) for
parameter $(t,\sb_t)$.  Theorem~\ref{the:main} states that there is a
uniform FPT-approximation algorithm for the
detection of strong $\TW_{\leq t}$-backdoor sets of size $k$, for
parameter $(t,k)$, as it is a fixed-parameter algorithm
that computes a solution that approximates the optimum with an error
bounded by a function of the parameter~\cite{Marx08b}.

\paragraph{Related work.}
Williams \etal~\cite{WilliamsGomesSelman03} introduced the notion of
backdoor sets and the
parameterized complexity of finding small backdoor sets was initiated by
Nishimura \etal~\cite{NishimuraRagdeSzeider04-informal}. They showed that
with respect to the classes of Horn formulas and of 2CNF formulas, the
detection of strong backdoor sets is fixed-parameter tractable. Their
algorithms exploit the fact that for these two base classes strong and
deletion backdoor sets coincide.  For other base classes, deleting
literals is a less powerful operation than applying partial truth
assignments. This is the case, for instance, for \RHorn, the class of renamable
Horn formulas. In fact, finding a deletion \RHorn-backdoor set is
fixed-parameter tractable~\cite{RazgonOSullivan09}, but it is open
whether this is the case for the detection of strong \RHorn-backdoor
sets. For clustering formulas, detection of
deletion backdoor sets is fixed-parameter tractable, detection of strong
backdoor sets is most probably not~\cite{NishimuraRagdeSzeider07}. Very
recently, the authors of the present paper showed
\cite{GaspersSzeider11a,GaspersSzeider12} that there are
FPT-approximation algorithms for the detection of strong
backdoor sets with respect to (i)~the base class of formulas with acyclic
incidence graphs, i.e., $\TW_{\le 1}$, and (ii)~the base class of nested formulas
(a proper subclass of $\TW_{\le 3}$ introduced by Knuth~\cite{Knuth90}). The present paper
generalizes this approach to base classes of bounded treewidth which
requires new ideas and significantly more involved combinatorial arguments.

We conclude this section by referring to a recent survey on the
parameterized complexity of backdoor sets~\cite{GaspersSzeider11b}.

\section{Preliminaries}
\label{section:prelims}

\paragraph{Graphs.}
Let $G$ be a simple, undirected, finite graph with vertex set $V=V(G)$ and edge set $E=E(G)$.
Let $S \subseteq V$\longversion{ be a subset of its vertices} and $v\in V$\longversion{ be a vertex}.
We denote by $G - S$ the graph obtained from $G$ by removing all vertices in $S$ and all edges incident to vertices in $S$.
We denote by $G[S]$ the graph $G - (V\setminus S)$.
The \emph{(open) neighborhood} of $v$ in $G$ is $N_G(v) = \set{u\in V : uv\in E}$, the \emph{(open) neighborhood} of $S$ in $G$ is $N_G(S) = \bigcup_{u\in S}N_G(u)\setminus S$, and their \emph{closed
neighborhoods} are $N_G[v] = N_G(v)\cup \set{v}$ and $N_G[S] = N_G(S)\cup S$, respectively. Subscripts may be omitted if the graph is clear from the context.
%

A \emph{tree decomposition} of $G$ is a pair 
$(\{X_i : i\in I\},T)$
where $X_i \subseteq V$, $i\in I$, and $T$ is a tree with elements
of $I$ as nodes
such that:
\longshort{\begin{enumerate}
  \item }{(1)} for each edge $uv\in E$, there is an $i\in I$ such that $\{u,v\} 
\subseteq X_i$, and
\longshort{\item}{(2)} for each vertex $v\in V$, $T[\set{i\in I: v\in X_i}]$ is a (connected) tree with at least one node.%
\longversion{\end{enumerate}}
The \emph{width} of a tree decomposition is $\max_{i \in I} |X_i|-1$.
The \emph{treewidth} \cite{RobertsonSeymour86} of $G$
is the minimum width taken over all tree decompositions
of $G$ and it is denoted by $\tw(G)$.

For other standard graph-theoretic notions not defined here, we refer to \cite{Diestel10}.

\paragraph{CNF formulas and satisfiability.}
We consider propositional formulas in conjunctive normal form (CNF) where no clause contains
a complementary pair of literals.
For a clause $c$, we write $\lit(c)$ and $\var(c)$ for the sets of literals and variables
occurring in $c$, respectively.
For a CNF formula $F$ we write $\cla(F)$ for its set of clauses,
$\lit(F) = \bigcup_{c\in \cla(F)} \lit(c)$ for its set of literals, and
$\var(F) = \bigcup_{c\in \cla(F)} \var(c)$ for its set of variables.
The \emph{size} of $F$ is $|F|=|\var(F)| + \sum_{c\in \cla(F)} (1+|\lit(c)|)$.

For a set $X\subseteq \var(F)$ we denote by $2^X$ the set of
all mappings $\tau:X\rightarrow \set{0,1}$, the \emph{truth assignments} on $X$.
A truth assignment $\tau\in 2^X$
can be extended to 
the literals over $X$ 
by setting $\tau(\neg x) = 1-\tau(x)$ for all $x\in X$.
The formula
$F[\tau]$ is obtained from $F$ by removing all clauses $c$
such that $\tau$ sets a literal of $c$ to~1, and removing the literals set to~0
from all remaining clauses.

A CNF formula $F$ is \emph{satisfiable} if there is some $\tau\in
2^{\var(F)}$ with $\cla(F[\tau])=\emptyset$. 
SAT is the $\NP$-complete problem of deciding whether a given CNF formula is
satisfiable~\cite{Cook71,Levin73}. \#SAT is the \#P-complete problem of
determining the number of distinct $\tau\in 2^{\var(F)}$ with $\cla(F[\tau])=\emptyset$ \cite{Valiant79b}.

\paragraph{Formulas with bounded incidence treewidth.}
The \emph{incidence graph} of a CNF formula $F$ is the bipartite graph $\inc(F)=(V,E)$ with
$V = \var(F) \cup \cla(F)$ and for a variable $x \in \var(F)$ and a clause $c \in \cla(F)$
we have $x c \in E$ if $x\in \var(c)$. The \emph{sign} of the edge $x c$ is \emph{positive}
if $x\in \lit(c)$ and \emph{negative} if $\neg x \in \lit(c)$. Note that $|V|+|E|=|F|$.

The class \Treewidth contains all CNF formulas $F$ with $\tw(\inc(F))\le t$.
For any fixed $t\ge 0$ and any CNF formula $F\in \Treewidth$, a tree decomposition of $\inc(F)$ of width at most $t$ can be found by Bodlaender's algorithm \cite{Bodlaender96}
in time $O(|F|)$.
Given a tree decomposition of width at most $t$ of $\inc(F)$, the number of satisfying assignments of
$F$ can be determined in time $O(|F|)$ \cite{FischerMakowskyRavve06,SamerSzeider10}.

Finally, note that, if $\tau \in 2^X$ is a partial truth assignment for a CNF formula $F$, then $\inc(F[\tau])$ is an induced subgraph of $\inc(F)$,
namely $\inc(F[\tau])$ is obtained from $\inc(F) - X$ by removing each vertex corresponding to a clause that contains a literal $\ell$ with $\tau(\ell)=1$.

\paragraph{Backdoors.}
Backdoor sets are defined with respect to a fixed class $\cC$ of CNF
formulas, the \emph{base class}.
Let $F$ be a CNF formula and $B\subseteq \var(F)$.
$B$ is a \emph{strong} \emph{$\cC$-\BDS{}} of $F$ if $F[\tau]\in \cC$ for each $\tau \in 2^B$.
$B$ is a \emph{deletion $\cC$-\BDS{}} of $F$ if $F - B \in \cC$, where $F - B$ 
is obtained from $F$ by removing all literals in $\set{x, \neg x: x\in B}$ from its clauses.

If we are given a strong $\cC$-\BDS of $F$ of
size $k$, we can reduce the satisfiability of $F$ to the satisfiability
of $2^k$ formulas in $\cC$.
If $\cC$ is clause-induced (\ie, $F\in \cC$ implies $F'\in \cC$ for every CNF formula $F'$ with $\cla(F')\subseteq \cla(F)$),
any deletion $\cC$-\BDS of $F$
is a strong $\cC$-\BDS of $F$.
The interest in deletion \BDSs is motivated for base classes where they
are easier to detect than strong \BDSs.
The challenging problem is to find a strong 
or deletion
$\cC$-\BDS of size at most $k$ if it exists.
Denote by $\sb_t(F)$ the size of a smallest \STBDS.


\paragraph{Graph minors.}
The operation of \emph{merging} a subgraph $H$ or a vertex subset $V(H)$ of a graph $G$ into a vertex $v$
produces the graph $G'$ such that $G'-\set{v}=G- V(H)$ and $N_{G'}(v) = N_G(H)$.
The \emph{contraction} operation merges a connected subgraph.
The \emph{dissolution} operation contracts an edge incident to a vertex of degree~2.

A graph $H$ is a \emph{minor} of a graph $G$ if $H$ can be obtained from a subgraph of $G$
by contractions.
If $H$ is a minor of $G$, then one can find a model of $H$ in $G$.
A \emph{model} of $H$ in $G$ is a set of vertex-disjoint connected subgraphs
of $G$, one subgraph $C_u$ for each vertex $u$ of $H$, such that if $uv$ is an edge in $H$, then
there is an edge in $G$ with one endpoint in $C_u$ and the other in $C_v$.

A graph $H$ is a \emph{topological minor} of a graph $G$ if $H$ can be obtained from a subgraph of $G$
by dissolutions.
If $H$ is a topological minor of $G$, then $G$ has a topological model of $H$.
A \emph{topological model} of $H$ in $G$ is a subgraph of $G$ that can be obtained
from $H$ by replacing its edges by independent paths. A set of paths is \emph{independent} if
none of them contains an interior vertex of another. We also say that $G$ contains a \emph{subdivision} of
$H$ as a subgraph.

\paragraph{Obstructions to small treewidth.}
It is well-known that $\tw(G) \ge \tw(H)$ if $H$ is a minor of $G$.
We will use the following three (classes of) graphs to lower bound the treewidth of a graph containing any of them as a minor.
See Figure \ref{fig:wall}.
The complete graph $K_r$ has treewidth $r-1$.
The complete bipartite graph $K_{r,r}$ has treewidth $r$.
The \emph{$r$-wall} is the graph
$W_r=(V,E)$ with vertex set $V = \{(i, j) : 1 \le i \le r, 1 \le j \le r\}$ in which two vertices
$(i,j)$ and $(i',j')$ are adjacent \myiff either $j'=j$ and $i' \in \{i-1, i+1\}$, or $i'=i$ and $j'=j+(-1)^{i+j}$.
We say that a vertex $(i,j)\in V$ has horizontal index $i$ and vertical index $j$.
The $r$-wall has treewidth at least $\lfloor \frac{r}{2} \rfloor$ (it is a minor of the $\lfloor \frac{r}{2} \rfloor \times \lfloor \frac{r}{2} \rfloor$-grid,
which has treewidth $\lfloor \frac{r}{2} \rfloor$ if $\lfloor \frac{r}{2} \rfloor\ge 2$ \cite{RobertsonSeymour91}).

We will also need to find a large wall as a topological minor if the formula has large incidence treewidth.
Its existence is guaranteed by a theorem of Robertson and Seymour.

\begin{theorem}[\cite{RobertsonSeymour86b}]
 For every positive integer $r$, there exists a constant $f(r)$ such that if a graph $G$
 has treewidth at least $f(r)$, then $G$ contains an $r$-wall as a topological minor.
\end{theorem}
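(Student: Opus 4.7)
The plan is to prove this via the duality between treewidth and brambles (equivalently, tangles), combined with a construction that extracts a large wall from a highly linked substructure.

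First, I would invoke the standard bramble/treewidth duality: if $\tw(G) \geq k$ then $G$ admits a bramble of order at least $k+1$, i.e., a family of pairwise touching connected subgraphs that cannot all be hit by fewer than $k+1$ vertices. Choosing $k := f(r)$ for a sufficiently fast-growing function $f$ guarantees a very large bramble in $G$. Next I would extract from this bramble a well-linked set $X$ of vertices: a set with the property that for any equal-sized subsets $A,B \subseteq X$ there exist $|A|$ pairwise vertex-disjoint paths between $A$ and $B$. Such a set of size polynomial in the bramble's order can be constructed greedily, exploiting the fact that any small separator violating high connectivity must miss some bramble element and hence leaves most of $X$ on one side, permitting successive enlargement.

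The third and decisive step is to assemble an $r \times r$ grid as a minor using a sufficiently large well-linked set; since an $r$-wall is essentially a subdivided subcubic grid, the desired topological minor then follows from a standard local surgery. Concretely, one fixes an ordered family of vertex-disjoint ``horizontal'' paths with endpoints in $X$, and then uses well-linkedness to route ``vertical'' paths so that they meet the horizontal paths in the required grid pattern.

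The main obstacle is this last step: one must arrange the vertical paths so that they cross the horizontal ones in a consistent left-to-right order without becoming tangled. Robertson and Seymour handle this through the theory of tangles, which assign a consistent orientation to every small separation and thereby induce a canonical linear ordering of the horizontal paths together with uniform notions of ``left'' and ``right'' that support iterative rerouting. More recent proofs obtain quantitatively better bounds on $f(r)$ via flow-based or recursive decomposition arguments, but for the present application any bound suffices, since we only require $f(r)$ to be finite.
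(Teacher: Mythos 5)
The paper does not prove this statement; it is the Excluded Grid/Wall Theorem, quoted verbatim with a citation to Robertson and Seymour's \emph{Graph Minors V} and used throughout as a black box. There is therefore no proof in the paper against which your attempt can be compared --- what matters is whether your sketch would stand on its own as a proof, and it would not.

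Your outline does capture the skeleton of modern expositions (bramble/treewidth duality, extraction of a well-linked set, routing of horizontal and vertical path systems), but the third step is the entire technical content of the theorem, and you explicitly defer it: ``Robertson and Seymour handle this through the theory of tangles'' is a pointer to the literature, not an argument. Untangling the crossing pattern of the vertical paths against the horizontal ones --- choosing consistent left/right orientations, rerouting to eliminate inversions, and controlling the losses at each iteration --- is where all the work lives, and nothing in your write-up supplies it. Two further points deserve care. First, your construction as described yields an $r \times r$ grid as a \emph{minor}, whereas the statement asks for an $r$-wall as a \emph{topological} minor; you need the additional (standard but nontrivial) observation that a sufficiently large grid minor forces a wall as a topological minor, exploiting that walls are subcubic so that minor containment and topological-minor containment coincide up to a change in $r$. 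Second, the original Robertson--Seymour proof does not use brambles (which appear later, in Seymour--Thomas), so if you intend this as a reconstruction of \emph{their} argument it is historically off; as a sketch of the modern bramble-based route it is fine in spirit but far from complete.
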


\noindent
By \cite{RobertsonSeymourThomas94}, $f(r) \le 20^{64 r^5}$.
For any fixed $r$,
we can use the cubic algorithm by Grohe \etal~\cite{GroheKMW11}
to find a topological model of an $r$-wall in a graph $G$ if $G$ contains an $r$-wall as a topological minor.

\tikzset{var/.style={inner sep=.15em,circle,fill=black,draw},
         clause/.style={minimum size=1mm,rectangle,fill=white,draw},
         label distance=-2pt}

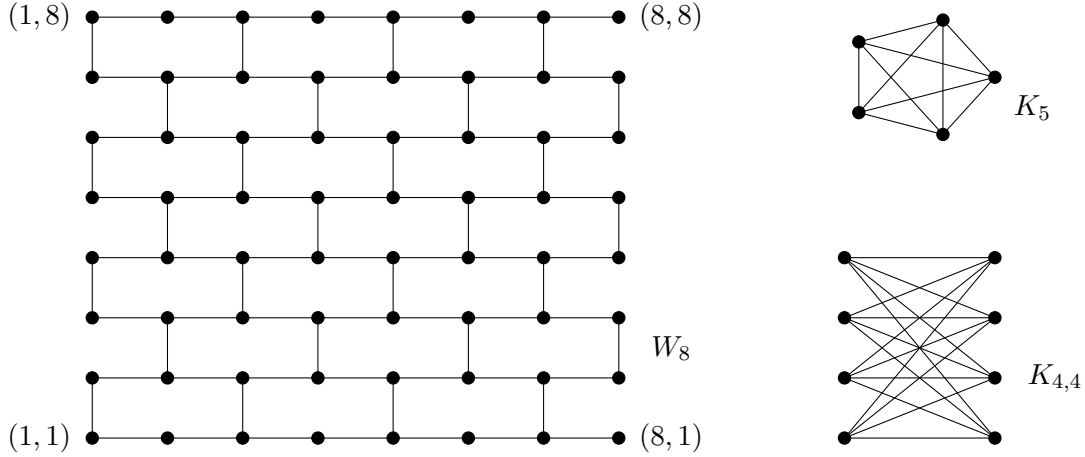
\begin{figure}[tb]
 \centering
  \begin{tikzpicture}[xscale=1,yscale=0.8]
   \pgfmathtruncatemacro\dimw{8}
   \foreach \x in {1,2,...,\dimw}
    \foreach \y in {1,2,...,\dimw} {
     \node at (\x,\y) [var] {};
     \ifnum\x<\dimw
       \draw (\x,\y) -- +(1,0);
     \fi
     \pgfmathtruncatemacro\even{round(mod(\x+\y,2))}
     \let\res\even
     \ifnum\res=0
       \ifnum\y<\dimw
         \draw (\x,\y) -- +(0,1);
       \fi
     \fi
    }

  \node[xshift=0.7cm] at (\dimw,2.5) {$W_{\dimw}$};
  \node at (0.3,1) {$(1,1)$};
  \node at (0.3,\dimw) {$(1,\dimw)$};
  \node[xshift=0.7cm] at (\dimw,1) {$(\dimw,1)$};
  \node[xshift=0.7cm] at (\dimw,\dimw) {$(\dimw,\dimw)$};
  

  \begin{scope}[xshift=12cm,yshift=7cm]
   \pgfmathtruncatemacro\dimc{5}
   \foreach \x in {1,2,...,\dimc} {
    \pgfmathtruncatemacro\deg{round(\x*360/\dimc)}
    \node at +(\deg:1cm) [var] {};
    \foreach \y in {\x,...,\dimc} {
     \pgfmathtruncatemacro\degy{round(\y*360/\dimc)}
     \draw +(\deg:1cm) -- +(\degy:1cm);
    }
  }

  \node at (1.5,-0.5) {$K_{\dimc}$};
  \end{scope}

  \begin{scope}[xshift=11cm,yshift=0cm]
   \pgfmathtruncatemacro\dimcb{4}
   \foreach \x in {1,2,...,\dimcb} {
    \node at (0,\x) [var] {};
    \node at (2,\x) [var] {};
    \foreach \y in {1,...,\dimcb} {
     \draw (0,\x) -- (2,\y);
    }
  }

  \node at (2.8,2) {$K_{\dimcb,\dimcb}$};
  \end{scope}

  \end{tikzpicture}
  \caption{Some graphs with treewidth $4$.}
  \label{fig:wall}
\end{figure}


\section{The algorithms}

We start with the overall outline of our algorithms.
We rely on the following two lemmas whose proofs we defer
to the next two subsections.

\begin{lemma}\label{lem:wall}
  There is a quadratic-time algorithm that, given a CNF formula~$F$, two
  constants~$t\ge 0$, $k\ge 1$, and a topological model of a $\kwall$-wall in $\inc(F)$,
  computes a set $S^*\subseteq \var(F)$ of constant size such
  that every \STBDS of size at most $k$ contains a variable from $S^*$.
\end{lemma}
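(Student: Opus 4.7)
The plan is to extract many vertex-disjoint \Wobss from the given topological model of a $\kwall$-wall in $\inc(F)$ and then to show, via the \obstemp, that every \STBDS of size at most $k$ must intersect a specific constant-size set $S^*$. Concretely, I would carve the given topological model into a family $W_1, \dots, W_N$ of pairwise vertex-disjoint topological models of $r$-walls, with $r = 2t+3$ so that each $W_i$ has treewidth at least $t+1$, and with $N = N(k,t)$ a constant large enough for a pigeonhole step; this is possible by a grid-like decomposition, provided $\kwall$ was chosen sufficiently large. Each $W_i$ is then a \Wobs. If $B$ is a \STBDS of size at most $k$, then each $W_i$ must be killed by $B$, either \emph{internally} (some $x \in B$ lies in $V(W_i) \cap \var(F)$) or \emph{externally} (some $x \in B$ outside $V(W_i)$ has clauses of both signs in $W_i$); otherwise the assignment $\tau \in 2^B$ that avoids satisfying any clause of $W_i$ would leave $W_i$ as a subgraph of $\inc(F[\tau])$, contradicting $F[\tau] \in \Treewidth$.

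Since at most $k$ of the $W_i$ can be killed internally, at least $N-k$ must be killed externally. For each $W_i$ I would construct its \obstemp $H_i$: a bipartite graph whose left side lists the external killers of $W_i$ and whose right side consists of carefully chosen vertex-disjoint connected subgraphs of $W_i$, with edges indicating which of these subgraphs an external killer can eliminate under some assignment. The key combinatorial claim is that for $W_i$ to be killed externally, $B$ must contain an \emph{effective} external killer from a set $E_i \subseteq \var(F)$ of size bounded by a function of $k$ and $t$; an ineffective collection of killers removes too few clauses and leaves a sub-wall of $W_i$ of treewidth $> t$ in $\inc(F[\tau])$ for some $\tau$. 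Setting $S^* = (V(W_1) \cap \var(F)) \cup E_1$ then yields a constant-size set that every \STBDS of size $\le k$ must hit, because no such $B$ avoiding $S^*$ can kill $W_1$.

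The main obstacle is establishing effectiveness and bounding $|E_i|$ uniformly in $k$ and $t$: a single external killer may satisfy only a constant number of clauses of $W_i$, so the treewidth of $W_i$ in $\inc(F[\tau])$ need not drop unless enough killers jointly destroy a large portion. The \obstemp is designed to track exactly this: each vertex on its right side represents a minimal substructure of $W_i$ whose removal is necessary to reduce treewidth, and by the treewidth lower bound for walls, a bounded number of such substructures must be destroyed to bring the treewidth below $t+1$. Making this precise requires a careful topological-minor argument inside $W_i$, together with a shrinking/cleanup procedure on sub-walls, which is what forces $r$ (and hence $\kwall$) to be chosen large relative to $k$ and $t$. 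Algorithmically, the entire construction operates on substructures of $\inc(F)$ of size depending only on $k$ and $t$, so the running time is dominated by scanning clause-sign information and computing incidences, giving the claimed quadratic bound.
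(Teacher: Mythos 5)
Your proposal correctly reproduces the setup (carving the wall into many vertex-disjoint \Wobss, internal vs.\ external killing, pigeonhole on at most $k$ internal kills), but the core step—extracting a bounded-size set $E_1$ of ``effective'' external killers for a single \Wobs $W_1$ and taking $S^* = (V(W_1)\cap\var(F))\cup E_1$—does not work. A single external killer $x$ of $W_1$ suffices to kill $W_1$ entirely on its own: if $c,c'\in V(W_1)$ contain $x$ and $\neg x$, then every assignment to $x$ deletes one of $c,c'$, and since $W_1$ has treewidth $\geq t+1$, removing any vertex from it is already enough for $W_1$ to cease to be a subgraph. There is no meaningful notion of an ``ineffective'' killer that ``removes too few clauses,'' and the set of external killers of a single \Wobs can be arbitrarily large (any variable can occur with both signs among the subdividing clauses of $W_1$), so no such bounded $E_1$ exists.

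The paper gets around exactly this obstacle by never trying to bound the killers of one \Wobs. Instead it (i) guesses which $k$ \Wobss are killed internally, (ii) among the remaining ones, applies pigeonhole over the $2^k$ subsets of the \BDS to find a large family $\cO_s$ of \Wobss all externally killed by the \emph{same} $\ell$ backdoor variables, and (iii) takes $Z$ to be the \emph{common} external killers of all of $\cO_s$. When $|Z|$ is small (Rule 1) we are done. When $|Z|$ is large, the \obstemp for each $W\in\cO_s$ partitions $V(W)$ into connected regions and records, in a bipartite graph on $Z\cup Q$, which $Z$-vertices border which regions, with degree constraints on both sides. The soundness argument then shows that if the \BDS avoids the proposed $S$ (e.g.\ the highest-degree $Z$-vertices), one can pick a truth assignment $\tau$ under which many regions across many \Wobss survive intact, contract each surviving region into a $Z$-vertex, and obtain either a $K_{t+1,t+1}$ minor (Rule 2, from many $Q$-vertices sharing a neighborhood) or a $K_{t+2}$ minor via Mader's density theorem (Rule 3, from an edge count in the contracted graph $H''[Z']$). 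Your sketch of the \obstemp as ``minimal substructures whose removal is necessary to reduce treewidth'' inside a single $W_i$ misreads its role: it is not a device for lower-bounding how many pieces of one \Wobs must be destroyed, but a device for assembling a new dense obstruction out of pieces of \emph{many} \Wobss that jointly share a set of potential killers. Without the pigeonhole-to-common-killers step and the cross-\Wobs minor construction, the argument does not close.
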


\begin{lemma}\label{lem:mso}
  There is a linear-time algorithm that,
  given a CNF formula~$F$, a constant $t\ge 0$, and a tree decomposition of
  $\inc(F)$ of constant width, computes a smallest
  \STBDS of $F$.
\end{lemma}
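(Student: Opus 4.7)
The plan is to express the property ``$B$ is a strong $\TW_{\leq t}$-backdoor set of $F$'' by a single monadic second-order (MSO) sentence on the signed incidence graph of $F$, and then invoke the optimization version of Courcelle's theorem (Arnborg, Lagergren and Seese~\cite{ArnborgLagergrenSeese91}) to compute a smallest such set in linear time, using the tree decomposition of constant width given as input.

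First, since the class of graphs of treewidth at most $t$ is closed under minors, the Graph Minor Theorem guarantees a finite set $\cO_t$ of forbidden minors for $\TW_{\leq t}$. For every fixed $t$ the set $\cO_t$ can be computed explicitly by the results of \cite{AdlerGK08,Lagergren98}, so we may treat $\cO_t$ as a fixed finite list of graphs appearing as constants in the MSO sentence.

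Next, I model $\inc(F)$ as a relational structure on the universe $\var(F)\cup\cla(F)$ with unary predicates distinguishing variables from clauses and with two binary edge relations $E^+$ and $E^-$ encoding the signs of the incidences. I then write an MSO formula $\varphi(B)$ stating: $B\subseteq \var(F)$, and for every $T\subseteq B$ (encoding the partial assignment in which variables of $T$ are set to $\true$ and those of $B\setminus T$ to $\false$) and every $H\in \cO_t$, the subgraph of $\inc(F)$ induced on $V(\inc(F))\setminus D_{B,T}$ does not contain $H$ as a minor. Here $D_{B,T}$ consists of $B$ together with all clauses satisfied by the corresponding assignment, i.e.\ clauses adjacent via $E^+$ to some $x\in T$ or via $E^-$ to some $x\in B\setminus T$. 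Membership in $D_{B,T}$ is first-order definable with parameters $B,T$, so the relativization to the surviving induced subgraph is available inside MSO; and for each fixed $H\in\cO_t$ the property ``contains $H$ as a minor'' is a standard MSO property (existentially quantify over the branch sets and assert connectivity and pairwise adjacency). Applying the optimization extension of Courcelle's theorem~\cite{ArnborgLagergrenSeese91} with the given tree decomposition of constant width then yields a smallest witness $B$ of $\varphi(B)$ in time linear in $|F|$.

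The main conceptual obstacle is capturing, inside one MSO formula, the effect of an \emph{arbitrary} partial assignment to $B$ on the incidence graph. The trick is to quantify universally over a set $T\subseteq B$ recording which variables are set to $\true$, and to express the surviving incidence graph as a first-order definable induced subgraph using the two sign-labeled edge relations. Everything else---the finiteness and effective construction of $\cO_t$, MSO-expressibility of minor containment, and linear-time MSO optimization on graphs of bounded treewidth given a width-$O(1)$ tree decomposition---is by now standard.
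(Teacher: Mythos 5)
Your proposal is correct and follows essentially the same route as the paper: appeal to the effectively computable finite obstruction set for treewidth~$\le t$, encode the property ``$B$ is a strong $\TW_{\leq t}$-backdoor set'' as an MSO formula by quantifying universally over the partial assignments to $B$ and checking that no forbidden minor survives in the reduced incidence graph, and then invoke the Arnborg--Lagergren--Seese optimization theorem on the given constant-width tree decomposition. The only difference is the encoding: the paper takes the universe to be $\lit(F)\cup\cla(F)$ with relations $\text{NEG}$ and $\text{IN}$ (which forces an extra ``closed under complementation'' condition on branch sets and roughly doubles the treewidth), whereas you keep the universe $\var(F)\cup\cla(F)$ with two signed edge relations $E^+,E^-$ and encode the assignment by a subset $T\subseteq B$ rather than by a set of literals $Y$ --- a slightly leaner but equivalent formalization that changes nothing substantive.
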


\noindent
Lemma \ref{lem:mso} will be invoked with a tree decomposition of $\inc(F)$
of width at most $\ktw$.
The functions $\kwall$ and $\ktw$ are related by the bound
from \cite{RobertsonSeymourThomas94}, implying that every graph
either has treewidth at most $\ktw$, or it has a $\kwall$-wall as a topological minor.
Here,
\begin{align*}
 \ktw &:= 20^{64 \cdot (\kwall)^5},\\
 \kwall &:= (2t+2) \cdot (1+\sqrt{\kobs}),\\
 \kobs &:= 2^{k} \cdot \ksame + k,\\
 \ksame &:= 3 (\tnb)^2 t 2^{2k},\text{ and}\\
 \tnb &:= \lceil 16 (t+2) \log (t+2) \rceil.
\end{align*}
The other functions of $k$ and $t$ will be used in Subsection \ref{subsec:wall}.

Theorem~\ref{the:main} can now be proved as follows.

\begin{proof}[Proof of Theorem~\ref{the:main}]
  Let
  $t,k\geq 0$ be constants, let $F$ be the given CNF formula, with
  $\Card{F}=n$ and let $G:=\inc(F)$. Using Bodlaender's
  algorithm \cite{Bodlaender96} we can decide in linear time whether $\tw(G)\leq \ktw$,
  and if so, compute a tree decomposition of smallest width in
  linear time. If indeed $\tw(G)\leq \ktw$, we use Lemma~\ref{lem:mso}
  to find a smallest strong $\TW_{\leq t}$\hy backdoor set $B$ of
  $F$. If $\Card{B}\leq k$ we output $B$, otherwise we output~NO.
  
  If $\tw(G)> \ktw$ then we proceed as follows.
  If $k=0$, we output NO. Otherwise, by
  \cite{RobertsonSeymourThomas94}, $G$ has a $\kwall$-wall as a
  topological minor, and by means of Grohe \etal's algorithm~\cite{GroheKMW11}, we can
  compute a topological model of a $\kwall$\hy wall in $G$ in time
  $O(n^3)$. By Lemma~\ref{lem:wall}, we can find in time $O(n^2)$ a set
  $S^*\subseteq \var(F)$ of constant size such that every \STBDS of $F$ of size at
  most $k$ contains a variable from $S^*$. For each $x\in S^*$, the
  algorithm recurses on both formulas $F[x=0]$ and $F[x=1]$ with parameter $k-1$.  If both
  recursive calls return \STBDSs $B_{\neg x}$ and $B_{x}$, then $\{x\}
  \cup B_{x} \cup B_{\neg x}$ is a \STBDS of $F$. We can upper bound its size
  $s(k)$ by the recurrence
   $s(k) \le 1+2\cdot s(k-1)$,
  with $s(0)=0$ and $s(1)=1$. The recurrence is satisfied by setting $s(k) = 2^k-1$.
  In case a recursive call returns NO,
  no \STBDS of $F$ of size at most~$k$ contains $x$.
  Thus, if for some $x\in S^*$, both recursive calls return backdoor sets,
  we obtain a backdoor set of $F$ of size at most $2^k-1$, and if
  for every $x\in S^*$, some recursive call returns NO, $F$ has no
  \STBDS of size at most $k$.
  
  The number of nodes of the search tree modeling the recursive calls of this algorithm
  is a function of $k$ and $t$ only (and therefore constant), and in each node, the time spent by the
  algorithms is $O(n^2)$.
  The overall running time is thus dominated by
  the cubic running time of Grohe \etal's algorithm, hence we
  arrive at a total running time of $O(n^3)$.
\end{proof}

\noindent
Theorem~\ref{the:counting} follows easily from Theorem~\ref{the:main},
by computing first a backdoor set and evaluating the number of
satisfying assignments for all reduced formulas.
We present an alternative proof that does not rely on Lemma \ref{lem:mso}.
Instead of computing a backdoor set,
one can immediately compute the number of satisfying assignments of $F$ by dynamic programming
if $\tw(\inc(F)) \le \ktw$.

\begin{proof}[Proof of Theorem~\ref{the:counting}]
  Let $k,t\ge 0$ be two integers and assume we are given a CNF formula $F$
  with $\Card{F}=n$ and $\sb_t(F)\le k$.  We will compute the
  number of satisfying truth assignments of $F$, denoted $\#(F)$.  As
  before we use Bodlaender's linear-time algorithm \cite{Bodlaender96} to decide whether $\tw(G)\leq
  \ktw$, and if so, to compute a tree decomposition of smallest
  width. If $\tw(G)\leq \ktw$ then we use the tree decomposition and\longversion{, for instance,} the
  algorithm of \cite{SamerSzeider10} to compute $\#(F)$ in time
  $O(n)$.
  
  If $\tw(G)> \ktw$ then we compute, as in the proof of
  Theorem~\ref{the:main}, a strong $\TW_{\leq t}$\hy backdoor set $B$ of
  $F$ of size at most $2^k$ in time $O(n^3)$. For each
  $\tau\in 2^B$ the formula $F[\tau]$ belongs to
  $\TW_{\leq t}$. Hence we can compute $\#(F[\tau])$ in time $O(n)$
  by first computing a tree decomposition of width at most
  $t$, and then applying the counting algorithm
  of~\cite{SamerSzeider10}.  We obtain $\#(F)$ by taking
  $\sum_{\tau\in 2^B} 2^{d(F,\tau)} \,\#(F[\tau])$ where
  $d(F,\tau)=\Card{\var(F)\setminus (B\cup \var(F[\tau]))}$ denotes the number
  of variables that disappear from~$F[\tau]$ without being instantiated.
\end{proof}

\subsection{The incidence graph has a large wall as a topological minor}
\label{subsec:wall}

This subsection is devoted to the proof of Lemma \ref{lem:wall}
and contains the main combinatorial arguments of this paper.
Let $G=(V,E)=\inc(F)$ and suppose we are given
a topological model of a $\kwall$-wall in~$G$.
We start with the description of the algorithm.

A \emph{\Wobs{}} is a subgraph of $G$ that is a subdivision of a $(2t+2)$-wall.
Since a \Wobs, and any graph having a \Wobs as a topological minor, has treewidth at least $t+1$,
we have that for each assignment to the variables of a 
\STBDS, at least one vertex from each \Wobs vanishes in the incidence graph of the reduced formula.
Using the $\kwall$-wall, we now find a set $\cO$ of $\kobs$ vertex-disjoint \Wobss in $G$.

\begin{lemma}\label{lem:Wobs}
 Given a topological model of a $\kwall$-wall in $G$, a set of $\kobs$ vertex-disjoint \Wobss can be found in linear time.
\end{lemma}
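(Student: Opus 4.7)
The idea is to carve the given topological model of the $\kwall$-wall in $G$ into $\kobs$ pairwise vertex-disjoint subdivisions of $(2t+2)$-walls. Let the topological model consist of branch vertices $\{v_{i,j} : 1 \le i,j \le \kwall\}$ in $G$ together with a family of internally vertex-disjoint paths in $G$, one per edge of the wall. I would proceed in two stages: first partition the abstract wall into many vertex-disjoint $(2t+2)$-subwalls, and then pull back each subwall through the model.

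\textbf{Step 1 (combinatorial partition).} I partition the index set $\{1,\ldots,\kwall\}^2$ into $\kobs$ pairwise disjoint $(2t+2)\times(2t+2)$ contiguous blocks, chosen so that each induces, inside the $\kwall$-wall, a subgraph isomorphic to a $(2t+2)$-wall. Since $\kwall/(2t+2)=1+\sqrt{\kobs}$, we can fit at least $\lceil\sqrt{\kobs}\rceil^2 \ge \kobs$ such blocks disjointly, even after aligning them to the correct parity of the wall's vertical-edge pattern.

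\textbf{Step 2 (pulling back through the model).} For each block $B_\ell$ obtained in Step~1, let $W_\ell$ be the subgraph of $G$ consisting of all branch vertices $v_{i,j}$ with $(i,j)\in B_\ell$ together with all paths of the topological model that correspond to edges of the wall whose endpoints both lie in $B_\ell$. By construction $W_\ell$ is a subdivision of a $(2t+2)$-wall, hence a \Wobs. Vertex-disjointness across the $W_\ell$ is immediate from the definition of a topological model: branch vertices for distinct indices are distinct, paths for different edges are internally vertex-disjoint, and no internal vertex of any path is itself a branch vertex (degrees differ).

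\textbf{Running time.} Given the topological model as input, locating the blocks (whose number and positions depend only on the constants $k$ and $t$) and assembling the corresponding subgraphs amounts to a single linear-time pass over the branch vertices and subdividing paths.

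\textbf{Main obstacle.} The only delicate step is the combinatorial partition of the $\kwall$-wall into $(2t+2)$-subwalls. Because the wall has vertical edges only at indices with $(-1)^{i+j}$ of the right sign, a naively placed $(2t+2)\times(2t+2)$ contiguous block may miss a few boundary vertical edges and fail to be a genuine $(2t+2)$-wall. This is resolved either by aligning block positions to the correct parity or by leaving a one-index safety gap between adjacent blocks; the slack built into $\kwall=(2t+2)(1+\sqrt{\kobs})$ (compared with the tight bound $(2t+2)\sqrt{\kobs}$) comfortably accommodates either adjustment while still yielding at least $\kobs$ disjoint subwalls.
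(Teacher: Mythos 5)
Your proof follows the paper's approach exactly: partition the $\kwall$-wall into $(2t+2)\times(2t+2)$ contiguous blocks with corners at $((i-1)(2t+2)+1,\,(j-1)(2t+2)+1)$ and pull each block back through the topological model, obtaining $\lfloor\kwall/(2t+2)\rfloor^2 \ge (\kwall/(2t+2)-1)^2 = \kobs$ vertex-disjoint \Wobss. The ``main obstacle'' you raise is in fact a non-issue: since $2t+2$ is even, the offset $(i-1)(2t+2)+(j-1)(2t+2)$ is always even, so the parity of $x+y$ (which governs which vertical edges exist in the wall) is preserved under the translation, every contiguous block already induces a genuine $(2t+2)$-wall, and no realignment or safety gap is needed.
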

\begin{proof}
For any two integers $i$ and $j$ with $1\le i,j\le \kwall/(2t+2)$,
the subgraph of a $\kwall$-wall induced on all vertices $(x,y)$
with $(i-1)\cdot (2t+2)+1 \le x \le i\cdot (2t+2)$
and $(j-1)\cdot (2t+2)+1 \le y \le j\cdot (2t+2)$
is a $(2t+2)$-wall.
A corresponding \Wobs can be found in $G$ by replacing edges by the independent paths they model
in the given topological model.
The number of \Wobss defined this way is $\left\lfloor \frac{\kwall}{2t+2} \right\rfloor^2 \ge \left( \frac{\kwall}{2t+2}-1 \right)^2 \ge \kobs$.
\end{proof}

\noindent
Denote by $\cO$ a set of $\kobs$ vertex-disjoint \Wobss obtained via Lemma \ref{lem:Wobs}.
A backdoor variable can destroy a \Wobs either because it participates in the \Wobs, or because
every setting of the variable satisfies a clause that participates in the \Wobs.

\begin{definition}
Let $x$ be a variable and $W$ a \Wobs in $G$.
We say that $x$ \emph{kills} $W$ if neither $\inc(F[x = \true])$ nor $\inc(F[x = \false])$ contains $W$ as a subgraph.
We say that $x$ kills $W$ \emph{internally} if $x\in V(W)$, and that $x$ kills $W$ \emph{externally} if $x$ kills $W$ but does not kill it internally.
In the latter case, $W$ contains a clause $c$ containing $x$ and a clause $c'$ containing $\neg x$ and we say that
$x$ kills $W$ (externally) \emph{in} $c$ and $c'$.
\end{definition}

\noindent
Our algorithm will perform a series of $3$ nondeterministic steps to guess some properties about the \STBDS it searches.
Each such guess is made out of a number of choices that is upper bounded by a function of $k$ and $t$.
At any stage of the algorithm, a \emph{valid} \STBDS is one that satisfies all the properties that have been guessed.
For a fixed series of guesses, the algorithm will compute a set $S\subseteq \var(F)$ such that every valid \STBDS of size at most $k$ contains a variable from $S$.
To make the algorithm deterministic, execute each possible combination of nondeterministic steps. The union of all
$S$, taken over all combinations of nondeterministic steps, forms a set $S^*$ and each \STBDS of size at most $k$ contains a variable from $S^*$.
Bounding the size of each $S$ by a function of $k$ and $t$ enables us to bound $|S^*|$ by a function of $k$ and $t$, and this will prove the lemma.

For any \STBDS of size at most $k$, at most $k$ \Wobss from $\cO$ are killed internally since they are vertex-disjoint.
The algorithm guesses $k$ \Wobss from $\cO$ that may be killed internally.
Let $\cO'$ denote the set of the remaining \Wobss, which need to be killed externally by any valid \STBDS.

Suppose $F$ has a valid \STBDS $B$ of size $k$. Then, $B$ defines a partition of $\cO'$ into
$2^k$ parts where for each part, the \Wobss contained in this part are killed externally by the same set of variables from $B$.
Since $|\cO'| = \kobs - k = 2^k \cdot \ksame$, at least one of these parts contains at least $\ksame$ \Wobss from $\cO'$.
The algorithm guesses a subset $\cO_s \subseteq \cO'$ of $\ksame$ \Wobs from this part and it guesses how many variables from the
\STBDS kill the \Wobss in this part externally.

Suppose each \Wobs in $\cO_s$ is killed externally by the same set of~$\ell$ backdoor variables, and no other backdoor variable
kills any \Wobs from $\cO_s$. Clearly, $1\le \ell \le k$.
Compute the set of external killers for each \Wobs in $\cO_s$. Denote by $Z$ the common external killers of the \Wobs in $\cO_s$.
The presumed \BDS contains exactly $\ell$ variables from $Z$ and no other variable from the \BDS kills any \Wobs from~$\cO_s$.

We will define three rules for the construction of $S$, and the algorithm will execute the first applicable rule.

\begin{myrule}[Few Common Killers]\label{rule:fewkillers}
 If $|Z|\le 6 k \tnb$, then set $S:=Z$.
\end{myrule}

\noindent
Before being able to state the other two rules, we come to the central combinatorial object in this paper.
For each \Wobs $W\in \cO_s$, we compute a valid \obstemp. An \emph{\obstemp{}} $\OT(W)$ of a \Wobs $W\in \cO_s$ is a
triple $(\cB(W),P,R)$, where
\begin{itemize}
 \item $\cB(W)$ is a bipartite graph whose vertex set is bipartitioned into the two independent sets $Z$ and $Q_W$, where $Q_W$ is a set of new vertices,
 \item $P$ is a partition of $V(W)$ into \emph{regions} such that for each region $A\in P$, we have that $W[A]$ is connected, and
 \item $R: Q_W\rightarrow P$ is a function associating a region of $P$ with each vertex in $Q_W$.
\end{itemize}
An \obstemp $\OT(W)=(\cB(W),P,R)$ of a \Wobs $W\in \cO_s$ is \emph{valid} if
it satisfies the following properties:
\begin{description}
 \item[(1) only existing edges:] for each $q\in Q_W$, $N_{\cB(W)}(q) \subseteq N_G(R(q))$,
 \item[(2) private neighbor:] for each $q\in Q_W$, there is a $z\in N_{\cB(W)}(q)$, called $q$'s \emph{private neighbor}, such that there is no other $q'\in N_{\cB(W)}(z)$ with $R(q')=R(q)$,
 \item[(3) degree-$Z$:] for each $z\in Z$, $d_{\cB(W)}(z) \ge 1$,
 \item[(4) degree-$Q_W$:] for each $q\in Q_W$, $\tnb \le d_{\cB(W)}(q) \le 3 \tnb$, and
 \item[(5) vulnerable vertex:] for each $q\in Q_W$, there is at most one vertex $v\in R(q)$, called $q$'s \emph{vulnerable vertex}, such that $N_G(v) \cap Z \not \subseteq N_{\cB(W)}(q)$.
\end{description}
We will use the \obstemps to identify a set of vertices that has a non-empty intersection with every valid \STBDS of size $k$.
Intuitively, an \obstemp chops up the vertex set of a \Wobs into regions. We will suppose the existence of a valid \BDS $B$ of size $k$ avoiding
a certain bounded set of variables and derive a contradiction using the \obstemps.
This is done by showing that for at least one $\tau \in 2^B$,
many regions remain in $F[\tau]$, so that we can contract each of them and construct a treewidth
obstruction using the contracted vertices. Each vertex from $Q_W$ models a contraction of a region, and its neighborhood
models a potential set of common external killers neighboring the contracted region. This explains Property (1).
Property (2) becomes handy when a region has many vertices from $Q_W$ that are associated with it. Namely, when we
contract regions, we would like to be able to guarantee a lower bound on the number of edges of the resulting graph
in terms of $|Q_W|$. To ensure that this lower bound translates into a lower bound in terms of $|Z|$, we need Property (3).
The degree lower bound of the next property will be needed so we can patch together a treewidth obstruction out of the
pieces modeled by the vertices in $Q_W$. The upper bound on the degree is required to guarantee that sufficiently
many vertices from $Q_W$ are not neighboring $B$.
Finally, the last property will be used to guarantee that for every $q\in Q_W$, if $B \cap N_{\cB(W)}(q)=\emptyset$, then
there is a truth assignment $\tau \in 2^B$ such that no vertex from $q$'s region is removed from $\inc(F)$ by
applying $\tau$ (see Lemma~\ref{lem:vulnerable}).

In the following lemma, we give a procedure to compute valid \obstemps.

\begin{lemma}\label{lem:obstemp}
 For each \Wobs $W\in \cO_s$, a valid \obstemp can be computed in time $O(|V(W)|^2 +|V(W)|\cdot |Z|)$.
\end{lemma}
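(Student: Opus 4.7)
The plan is to compute for each vertex its $Z$-neighborhood, classify vertices as heavy or light according to the size of this neighborhood, then build the partition $P$ via a greedy spanning-tree traversal, and finally instantiate one or more vertices of $Q_W$ per region. Concretely, for every $v\in V(W)$ I would first compute $Z(v):=N_G(v)\cap Z$; this takes $O(|V(W)|\cdot |Z|)$ time. A vertex is \emph{heavy} if $|Z(v)|\ge \tnb$ and \emph{light} otherwise. Each heavy vertex is declared a singleton region of $P$.

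To group the light vertices, I would take a spanning forest of the subgraph of $W$ induced by them and traverse each tree in post-order, greedily accumulating a pending region and finalizing it the moment the union of its $Z$-neighborhoods first reaches $\tnb$. Because adding a single light vertex increases the union by less than $\tnb$, each finalized light region has $|N_G(A)\cap Z|\in[\tnb, 2\tnb)$. Any leftover pending subset whose union stays below $\tnb$ is attached, through an incident edge in $W$, to an adjacent already-finalized region (either a heavy singleton or a previously finalized light region); such an adjacency exists because $W$ is connected and, in the regime where this lemma is invoked, $|Z|>\tnb$, so the first pending region in each component always finalizes successfully. If the leftover ends up merged into a light region, the combined $Z$-neighborhood is still strictly less than $3\tnb$.

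The final step is to build $Q_W$, $R$, and the edges of $\cB(W)$. For a heavy singleton $\{v\}$, I partition $Z(v)$ into disjoint blocks of size in $[\tnb,2\tnb]$ (always possible since $|Z(v)|\ge \tnb$) and create one $q$ per block with $N_{\cB(W)}(q)$ equal to that block and vulnerable vertex $v$. For a purely light region $A$, a single $q$ with $N_{\cB(W)}(q):=N_G(A)\cap Z$ suffices; every $v\in A$ satisfies $Z(v)\subseteq N_{\cB(W)}(q)$ and the vulnerable-vertex clause is vacuous. For a heavy-absorbed region with heavy vertex $v^*$ and $U:=\bigcup_{v\in A\setminus\{v^*\}}Z(v)$ (of size strictly less than $\tnb$), I partition $Z(v^*)\setminus U$ into disjoint chunks and create one $q$ per chunk with $N_{\cB(W)}(q):=U\cup\text{chunk}$, chunk sizes chosen so that $|N_{\cB(W)}(q)|\in[\tnb, 2\tnb]$; this is feasible because $|Z(v^*)\setminus U|\ge|Z(v^*)|-|U|\ge \tnb-|U|>0$.

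The five properties then hold by construction. Properties (1) and (4) are direct. Property (3) holds because every $z\in Z$ is a $Z$-neighbor of some $v\in V(W)$ (as $z$ is an external killer of $W$), and the region containing $v$ covers $z$ via one of its $q$'s. Property (5) holds because the only vertex whose $Z$-neighborhood is not fully contained in $N_{\cB(W)}(q)$ is the designated vulnerable one (or none at all, in purely light regions). Property (2) holds because whenever several $q$'s share a region, their chunks of $Z(v^*)$ are disjoint, so each $q$ has a private neighbor within its chunk. The hard part is the interaction of properties (3), (4), and (5): a region containing two vertices of large $Z$-degree cannot be collapsed into a single $q$ of bounded degree, while a region containing none at all may miss the lower degree bound. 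The heavy/light dichotomy is what makes these constraints simultaneously satisfiable, since it guarantees at most one vulnerable vertex per region and keeps purely light regions small. The claimed running time is dominated by the $O(|V(W)|\cdot |Z|)$ cost of computing the $Z(v)$'s and maintaining $Z$-neighborhood unions, with the $O(|V(W)|^2)$ slack absorbing spanning-forest, component-identification, and merging bookkeeping.
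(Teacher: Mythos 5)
Your high-level plan — classify vertices by $Z$-degree, grow connected regions along a spanning tree, chunk the large $Z$-neighborhood of a designated ``vulnerable'' vertex to spawn multiple $q$'s of bounded degree — is in the right spirit, and it correctly identifies why Properties (2)--(5) hold once the regions are built. But there is a real gap in how you build the regions, plus a secondary bound that does not hold as stated.

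The fatal issue is connectivity of the light regions. You say to ``traverse each tree in post-order, greedily accumulating a pending region and finalizing it the moment the union of its $Z$-neighborhoods first reaches $\tnb$.'' In a post-order traversal the pending set can span several sibling subtrees before their common parent is visited. Concretely, if a node $r$ has two leaf children $a$ and $b$, the post-order is $a,b,r$: if $a$ alone is finalized and then $\{b\}$ reaches $\tnb$ only after some node from a later sibling subtree is added, or — even simpler — if $\{a\}$ does not reach $\tnb$ and $\{a,b\}$ does, you finalize $\{a,b\}$, which is not connected in $W$. A valid \obstemp{} requires $W[A]$ connected for every $A\in P$, so the construction as described can output something that is not an \obstemp{} at all. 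The paper's procedure avoids this by only ever peeling off a \emph{full} subtree $T_v$ whose weight just crosses the threshold (choosing $v$ at maximum depth with $w(T_v)\ge\tnb$, after re-rooting so that the rest of the tree still has weight $\ge\tnb$); this keeps every region a subtree and therefore connected. Your scheme would need a comparable commitment device — e.g.\ finalize only at a node $v$ whose entire unassigned subtree reaches weight $\ge\tnb$ — at which point you have essentially reinvented the paper's Step~(B).

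A secondary problem: when several leftover light components attach to the same heavy singleton $v^*$, you get $U=\bigcup_{v\in A\setminus\{v^*\}}Z(v)$ with $|U|$ up to almost $3\tnb$ (one leftover per $W$-neighbor of $v^*$, degree at most $3$), yet you choose chunk sizes so that $|N_{\cB(W)}(q)|\in[\tnb,2\tnb]$. That is impossible once $|U|>2\tnb$. You would need to relax the target to $[\tnb,3\tnb]$ (which Property~(4) permits) and justify $|U|<3\tnb$, and also handle the case where a leftover must attach to a region that is already heavy-absorbed. None of this is fatal in principle, but the bookkeeping is exactly what the paper's re-rooting trick is designed to make unnecessary: the paper never attaches ``leftovers'' to arbitrary adjacent regions; it guarantees the residual tree always has weight $\ge\tnb$ until the last step, so the final region absorbs everything cleanly with weight at most $3\tnb$.
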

\begin{proof}
 We describe a procedure to compute a valid \obstemp $(\cB(W),P,R)$.
 It starts with $Q_W$ initially empty.
 Compute an arbitrary rooted spanning tree $T$ of $W$.
 For a node $v$ from $T$, denote by $T_v$ the subtree of $T$ rooted at $v$.
 The set of children in $T$ of a node $v$ is denoted $C_T(v)$ and its parent $p_T(v)$.
 For a subforest $T'\subseteq T$, denote by $Z(T') = Z \cap N_G(T')$ the subset of vertices from $Z$ that have a neighbor from $T'$ in $G$.
 The \emph{weight} $w(T')$ of $T'$ is $|Z(T')|$.
 We denote by $B_v = Z(T_v) \setminus Z(T_v - \{v\})$ the vertices from $Z$ that are incident to $v$ in $G$ but to no other node
 from $T_v$.
 If $uv\in E(T)$, then denote by $T_u(uv)$ the subtree obtained from $T$ by removing all nodes that are closer to $v$ than to $u$ in $T$
 (removing the edge $uv$ decomposes $T$ into $T_u(uv)$ and $T_v(uv)$).

 \begin{description}
  \item[(A)] If $w(T)>3 \tnb$, then select a new root $r(T)$ of $T$ such that for every child $c \in C_T(r(T))$ of $r(T)$ we have that $w(T - V(T_c)) \ge \tnb$.
 
  \item[(B)] Select a node $v$ in $T$ as follows.
 If $w(T)\le 3 \tnb$, then set $v:=r(T)$. Otherwise, select the node $v$ at maximum depth in $T$ such that $w(T_v) \ge \tnb$.
 The vertices from $T_v$ will constitute one region $A_v$ of $P$.
 Denoting $s= 3 \tnb - w(T_v-\{v\})$,
 we will now add a set of $\left\lceil \frac{|B_v|}{s} \right\rceil$ vertices to $Q_W$. All of them are associated with the region $A_v$.
 Denote these new vertices $q_1, \dots, q_{\lceil |B_v|/s \rceil}$, and denote the vertices in $B_v$ by $b_1, \dots, b_{|B_v|}$.
 For each $i, 1\le i\le \lceil |B_v|/s \rceil$, we set $N(q_i):= Z(T_v-\{v\}) \cup \{b_{(i-1)\cdot s+1}, \dots, b_{i\cdot s}\}$; indices are taken modulo $|B_v|$.
 If $v \neq r(T)$, then set $T:=T_{p(v)}(v p(v))$ (\ie, remove $V(T_v)$ from $T$) and go to Step (A).
 \end{description}

\noindent
 Now, we prove that this procedure computes a valid \obstemp.

 First we show that in case $w(T)>3 \tnb$, Step (A) is able to find a root $r(T)$ such that there is no $c \in C_T(r(T))$ with $w(T - V(T_c)) < \tnb$.
 Suppose that $T$ has no node $u$ such that $w(T_u(uv)) \ge \tnb$ for every $v\in N_T(u)$.
 Then, there is an infinite sequence of nodes $u_1, u_2, \dots$ such that $u_i$ neighbors $u_{i+1}$ and
 $w(T_{u_i}(u_i u_{i+1}))<\tnb$.
 Let $j$ be the smallest integer such that $u_i=u_j$ for some integer $i$ with $1\le i < j$.
 Since $T$ is acyclic, we have that $i+2=j$.
 But then, $w(T) \le w(T_{u_i}(u_i u_{i+1})) + w(T_{u_{i+1}}(u_i u_{i+1})) \le 2 \tnb -2$, contradicting the assumption that $w(T)>3 \tnb$.

 We observe that all edges of $\cB(W)$ have one endpoint in $Z$ and the other in $Q_W$. Thus, $Z \uplus Q_W$ is a bipartition of $\cB(W)$ into independent sets.
 The set $V(W)$ is partitioned into disjoint connected regions since each execution of Step (B) defines a new region equal to the vertices of a subtree of $T$, which is initially a spanning tree of $W$ and
 is removed from $T$ at the end of Step (B).
 
 Consider one execution of Step (B) of the procedure above. We will show that Properties (1)--(5) of a valid \obstemp
 hold for the relevant vertices considered in this execution, and this will guarantee these properties for all vertices.
 Property (1) is ensured for all new vertices introduced in $Q_W$ since $B_v\cup Z(T_v - \{v\}) = Z(T_v) \subseteq N_G(A_v)$.
 The private neighbor of a vertex $q_i$ is $b_{i\cdot s}$ if $i<\lceil |B_v|/s \rceil$ and $b_{|B_v|}$ if $i=\lceil |B_v|/s \rceil$.
 Property (3) is ensured for all vertices in $Z\cap N_G(T_v)$ since all of them receive at least one new neighbor in $\cB(W)$.
 For the lower bound of Property (4), we first show that at any time during the execution of this procedure, either $T$ is empty or $w(T) \ge \tnb$.
 Initially, this is true since $|Z|\ge \tnb$ (by Rule \ref{rule:fewkillers}) and every vertex from $Z$ is an external killer of $W$.
 This remains true since Step (A) makes sure that whenever the vertex $v$ chosen by Step (B) is not the root of $T$, 
 $w(T - V(T_v)) \ge \tnb$. Thus, Step (B) always finds a node $v$ such that $w(T_v) \ge \tnb$.
 Therefore, every vertex that is added to $Q_W$ has at least $\tnb$ neighbors.
 For the upper bound of Property (4), observe that since $W$ has maximum degree at most $3$, $T$ also has maximum degree at most $3$.
 Thus, $w(T_v - \{v\}) \le 3 (\tnb-1)$ since each tree in $T_v - \{v\}$ has weight at most $\tnb-1$ by the selection of $v$.
 Therefore, $d_{\cB(W)}(q_i) \le 3 \tnb$.
 Property (5) holds since $Z(T_v - \set{v}) \subseteq N_{\cB(W)}(q_i)$ and thus $v$ is the only vertex that can be vulnerable for~$q_i$.

 \medskip

 We upper bound the running time of the procedure as follows.
 A spanning tree of $W$ can be computed in time $O(|V(W)|)$.
 In a bottom-up fashion starting at the leaves of $T$, one can precompute $Z(T_u(uv))$ for all $uv\in E(T)$ in time $O(|V(W)|\cdot |Z|)$.
 Then, Step (A) can be implemented such that each execution runs in time $O(|V(W)|)$. One execution of Step (B) takes time $O(|V(W)|+|Z|)$.
 Since Steps (A) and (B) are executed at most $|V(W)|$ times, the running time is $O(|V(W)|\cdot (|V(W)|+|Z|))$.
\end{proof}

\noindent
The bipartite graph $\cB_m(\cO_s)$ is obtained by taking the union of all $\cB(W)$, $W\in \cO_s$.
Its subgraphs $\cB(W)$, $W\in \cO_s$, share the same vertex subset $Z$ but
the vertex subsets $Q_W$, $W\in \cO_s$, are pairwise disjoint.
The vertex set of $\cB_m(\cO_s)$
is $Z \uplus Q_m$, where $Q_m = \bigcup_{W\in \cO_s} Q_W$.

\begin{myrule}[Multiple Neighborhoods]\label{rule:MultiNb}
 If there is a subset $L \subseteq Z$ such that $L$ is the neighborhood of at least $t\cdot 2^k+1$ vertices in $\cB_m(\cO_s)$,
 then set $S:=L$.
\end{myrule}

\noindent
Obtain a bipartite graph $\cB(\cO_s)$ from $\cB_m(\cO_s)$ by repeatedly and exhaustively deleting vertices from $Q_m$ whose neighborhood
equals the neighborhood of some other vertex from $Q_m$. Denote the vertex set of the resulting graph $\cB(\cO_s)$ by $Z \uplus Q$.

\begin{myrule}[No Multiple Neighborhoods]\label{rule:NoMultiNb}
 Set $S$ to be the $6 k \tnb$ vertices from $Z$ of highest degree in $\cB(\cO_s)$ (ties are broken arbitrarily).
\end{myrule}

\noindent
This finishes the description of the algorithm.
The correctness of Rule \ref{rule:fewkillers} is obvious since any valid \STBDS\
contains $\ell$ variables from $Z$ and $\ell\ge 1$.
To prove the correctness of Rules \ref{rule:MultiNb} and \ref{rule:NoMultiNb},
we need the following lemma.

\begin{lemma}\label{lem:vulnerable}
 Let $W\in \cO_s$ be a \Wobs, $\OT(W)$ be a valid \obstemp of $W$ and $q\in Q_W$.
 Let $B$ be a valid \STBDS such that $B \subseteq \var(F) \setminus N_{\cB(W)}(q)$.
 There is a truth assignment $\tau$ to $B$ such that $\inc(F[\tau])$ contains all vertices from $R(q)$.
\end{lemma}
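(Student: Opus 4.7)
The plan is to construct the assignment $\tau\in 2^B$ explicitly, variable by variable, and then check that no vertex of $R(q)$ disappears from the incidence graph when $\tau$ is applied. A vertex $u\in R(q)$ can disappear in $\inc(F[\tau])$ only in two ways: either $u$ is a variable that lies in $B$, or $u$ is a clause satisfied by $\tau$. I would immediately dispose of the first possibility: since $R(q)\subseteq V(W)$ with $W\in \cO_s\subseteq \cO'$, any variable of $V(W)$ contained in $B$ would kill $W$ internally, contradicting the algorithm's guess that \Wobss in $\cO'$ are killed only externally. So variable vertices in $R(q)$ are automatically safe, and the task reduces to choosing $\tau$ so that no clause of $R(q)$ is satisfied.

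I would split $B$ into $B\setminus Z$ and $B\cap Z$ and define $\tau$ separately on each side. For $x\in B\setminus Z$, the guessed properties state that $x$ kills no \Wobs in $\cO_s$; in particular $x$ does not kill $W$. By the definition of killing, there is a setting $\epsilon_x\in\{0,1\}$ with $W\subseteq \inc(F[x=\epsilon_x])$, which means that no clause of $V(W)$ is satisfied by $x=\epsilon_x$. I would set $\tau(x):=\epsilon_x$.

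For $x\in B\cap Z$, the hypothesis $B\subseteq \var(F)\setminus N_{\cB(W)}(q)$ gives $x\notin N_{\cB(W)}(q)$. Here the vulnerable-vertex condition (Property~(5) of a valid \obstemp) is the key: if some clause $c\in R(q)$ had $x\in \var(c)$, then $x\in N_G(c)\cap Z$, and if $c$ were not $q$'s vulnerable vertex $v^*$, the inclusion $N_G(c)\cap Z\subseteq N_{\cB(W)}(q)$ would force $x\in N_{\cB(W)}(q)$, a contradiction. Hence the only clause of $R(q)$ that can contain $x$ is $v^*$. If $x\in \var(v^*)$, I would set $\tau(x)$ to the polarity opposite to the literal of $x$ in $v^*$; otherwise $\tau(x)$ is arbitrary. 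Either way $\tau(x)$ satisfies no clause of $R(q)$.

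The main obstacle I would anticipate is the risk of conflicting constraints on a single variable of $B$ coming from different clauses of $R(q)$. This risk evaporates from the construction above: for $x\in B\setminus Z$ a single setting $\epsilon_x$ preserves \emph{every} clause of $V(W)$ simultaneously, and for $x\in B\cap Z$ Property~(5) funnels every possible conflict into the single clause $v^*$. Hence $\tau$ is well defined, no clause of $R(q)$ is satisfied by $\tau$, and by the first paragraph no variable of $R(q)$ lies in $B$, so every vertex of $R(q)$ appears in $\inc(F[\tau])$, as required.
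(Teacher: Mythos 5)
Your proof is correct and follows essentially the same reasoning as the paper's: both rely on the validity of $B$ to exclude $R(q)$ and to handle variables outside $Z$, and on Property (5) to argue that conflicts among $Z$-variables are confined to a single vulnerable clause of $R(q)$. The only difference is presentational---you construct $\tau$ directly by splitting $B$ into $B\setminus Z$ and $B\cap Z$, whereas the paper argues by contradiction that no variable of $B$ can occur with both signs among the clauses of $R(q)$---but the two arguments are logically equivalent.
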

\begin{proof}
 Since $B$ is valid, it contains no variable from $R(q) \subseteq V(W)$.
 Thus, $\inc(F[\tau])$ contains all variables from $R(q)$.
 A truth assignment $\tau$ removes a clause $c\in R(q)$ from the incidence graph \myiff $c$ contains a literal $l$ such that $\tau(l)=1$.
 We show that no variable from $B$ appears both positively and negatively in the clauses from $R(q)$, and therefore there is a truth
 assignment $\tau$ to $B$ such that $\inc(F[\tau])$ contains all vertices from $R(q)$.

 Assume, for the sake of contradiction, that there is a variable $b\in B$ such that $b\in \lit(c)$ and $\neg b \in \lit(c')$ and $c,c'\in R(q)$.
 We have that $b \in Z$ because $b$ is in a valid \STBDS and $b$ is an external killer of $W$.
 Since $b \in (N_G(R(q)) \cap Z) \setminus N_{\cB(W)}(q)$, we conclude that $q$ has a vulnerable vertex $v$.
 But, since $v$ is the only vulnerable vertex of $q$, by Property (5), $c=c'$.
 We arrive at a contradiction, since no clause contains a variable and its negation.
\end{proof}

\begin{lemma}
 Rule \ref{rule:MultiNb} is sound.
\end{lemma}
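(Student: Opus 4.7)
The plan is to argue by contradiction: assume a valid strong $\TW_{\leq t}$-\BDS $B$ of size at most $k$ exists with $B \cap L = \emptyset$, and produce a truth assignment $\tau \in 2^B$ such that $\inc(F[\tau])$ contains $K_{t+1,t+1}$ as a minor, thereby forcing $\tw(\inc(F[\tau])) \geq t+1$ and violating $F[\tau] \in \TW_{\leq t}$.

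First I would apply the pigeonhole principle. Let $U \subseteq Q_m$ be the set of at least $t \cdot 2^k + 1$ vertices whose neighborhood in $\cB_m(\cO_s)$ equals $L$. For each $q \in U$, belonging to some $\cB(W)$ with $W \in \cO_s$, we have $N_{\cB(W)}(q) = L$, so $B \cap N_{\cB(W)}(q) = \emptyset$. Lemma~\ref{lem:vulnerable} therefore supplies a truth assignment $\tau_q \in 2^B$ for which $\inc(F[\tau_q])$ contains all vertices of $R(q)$. Since $|2^B| \leq 2^k$, some common $\tau \in 2^B$ is produced for at least $t+1$ of these vertices; call them $q_1, \dots, q_{t+1}$, lying in obstruction-templates of \Wobss $W_1, \dots, W_{t+1} \in \cO_s$ (not necessarily distinct).

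Next I would verify that the regions $R(q_1), \dots, R(q_{t+1})$ are pairwise vertex-disjoint. If $W_i \neq W_j$, then $V(W_i) \cap V(W_j) = \emptyset$ by the construction of $\cO$. If $W_i = W_j$, the private-neighbor property (Property~(2)) rules out $R(q_i) = R(q_j)$: otherwise $q_i$'s private neighbor $z$ would lie in $N_{\cB(W_i)}(q_i) \setminus N_{\cB(W_i)}(q_j)$, contradicting $N_{\cB(W_i)}(q_i) = N_{\cB(W_i)}(q_j) = L$. I would also note that since each $W_i \in \cO_s$ is killed externally, validity of $B$ forces $B \cap V(W_i) = \emptyset$, hence $B \cap R(q_i) = \emptyset$.

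Finally I would construct the $K_{t+1,t+1}$-minor. Because $\inc(F[\tau])$ is an induced subgraph of $G = \inc(F)$ and all vertices of $R(q_i)$ survive, $\inc(F[\tau])[R(q_i)]$ equals $G[R(q_i)]$, which is connected (it contains the connected graph $W_i[R(q_i)]$). Each $z \in L$ is a variable outside $B$, so $z$ survives, and by Property~(1), $z \in N_{\cB(W_i)}(q_i) \subseteq N_G(R(q_i))$; the corresponding incidence edge survives because its clause endpoint lies in the surviving set $R(q_i)$. Contracting each $R(q_i)$ in $\inc(F[\tau])$ to a single vertex $v_i$ therefore yields a $K_{t+1,|L|}$-minor with bipartition $\{v_1,\dots,v_{t+1}\} \cup L$. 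Since Property~(4) gives $|L| \geq \tnb = \lceil 16(t+2)\log(t+2)\rceil \geq t+1$, the minor contains $K_{t+1,t+1}$, which has treewidth $t+1$, producing the desired contradiction. The main obstacle is the bookkeeping around ``validity'': combining the internal/external kill distinctions with the private-neighbor property just tightly enough to guarantee $t+1$ disjoint, fully surviving regions sharing a common large neighborhood in $\inc(F[\tau])$.
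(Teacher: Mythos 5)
Your proof is correct and follows essentially the same route as the paper: pigeonhole on the $2^k$ truth assignments supplied by Lemma~\ref{lem:vulnerable} to find $t+1$ vertices $q_i$ whose regions all survive under a common $\tau$, use Property~(2) to get region-disjointness, and contract the surviving regions against the surviving set $L$ to obtain a $K_{t+1,t+1}$-minor. You spell out a few steps the paper leaves implicit (the Property~(2) argument ruling out $R(q_i)=R(q_j)$, the fact that $B\cap V(W_i)=\emptyset$ by validity, and the bound $|L|\geq\tnb\geq t+1$ via Property~(4)), but the decomposition and the key lemma invoked are identical.
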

\begin{proof}
 Let $Q_L \subseteq Q_m$ be a set of $t\cdot 2^k+1$ vertices such that for each $q\in Q_L$, $N_{\cB_m(\cO_s)}(q)=L$.
 For the sake of contradiction, suppose $F$ has a valid \STBDS $B$ of size~$k$ with $B\cap S = \emptyset$.
 By Lemma \ref{lem:vulnerable}, for each $q\in Q_L$, there is a truth assignment $\tau$ to $B$ such that
 $\inc(F[\tau])$ contains all vertices from $R(q)$.
 But there are at most $2^k$ truth assignments to $B$. Therefore, for at least one truth assignment $\tau$ to $B$,
 there is a set $Q_L' \subseteq Q_L$ of at least $\lceil |Q_L|/2^k \rceil = t+1$ vertices such that
 $\inc(F[\tau])$ contains all vertices from $R(q), q\in Q_L'$.
 By Property (2), no two distinct $q,q'\in Q_L$ are assigned to the same region.
 Consider the subgraph of $\inc(F[\tau])$ induced on all vertices in $L$ and $R_q, q\in Q_L'$.
 Contracting each region $R(q), q\in Q_L'$, one obtains
 a supergraph of a $K_{t+1,t+1}$. Thus, $\inc(F[\tau])$ has a $K_{t+1,t+1}$ as a minor,
 implying that its treewidth is at least $t+1$, a contradiction.
\end{proof}

\noindent
The correctness of Rule \ref{rule:NoMultiNb} will be shown with the use of a theorem by Mader.
\begin{theorem}[\cite{Mader68}]\label{thm:mader}
 Every graph $G=(V,E)$ with $|E| \ge c(x) \cdot |V|$ has a $K_x$-minor,
 where $c(x) = 8x\log x$.
\end{theorem}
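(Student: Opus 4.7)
My plan is to prove Mader's theorem in two stages, following the now-standard recipe for $K_x$-minor existence results: first pass to a subgraph of high minimum degree, and then exploit that density to construct a $K_x$ minor by controlled contractions.

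Stage 1 (minimum-degree reduction). Starting from $G=(V,E)$ with $|E|\ge c(x)|V|$, I would iteratively remove any vertex of degree less than $c(x)$. Each deletion removes one vertex and strictly fewer than $c(x)$ edges, so the invariant $|E'|\ge c(x)|V'|$ is preserved throughout. Since $|V'|$ strictly decreases and $|E'|\ge c(x)|V'|$ forces $|V'|\ge 1$, the process halts at a nonempty subgraph $G^{*}$ of minimum degree at least $c(x)=8x\log x$.

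Stage 2 (densest minor is locally saturated). Among all minors of $G^{*}$, I choose one $H=(V_H,E_H)$ maximizing the edge density $\rho:=|E_H|/|V_H|$; note $\rho\ge c(x)$. For any edge $uv\in E_H$, contracting $uv$ yields a minor with $|V_H|-1$ vertices and $|E_H|-1-\mathrm{codeg}_H(u,v)$ edges, where $\mathrm{codeg}_H(u,v)$ is the number of common neighbors of $u,v$. Maximality of $\rho$ gives
\[
\frac{|E_H|-1-\mathrm{codeg}_H(u,v)}{|V_H|-1}\;\le\;\frac{|E_H|}{|V_H|},
\]
which rearranges to $\mathrm{codeg}_H(u,v)\ge \rho-1\ge c(x)-1$. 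Thus in $H$ every edge has codegree at least $c(x)-1$, and simultaneously every vertex has degree at least $\rho$ (any vertex of smaller degree could be deleted, contradicting density maximality).

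Stage 3 (extract $K_x$ as a subgraph of $H$). I would then build the clique iteratively. Choose $v_1\in V_H$ and let $N_1:=N_H(v_1)$, so $|N_1|\ge c(x)$. Inductively, having chosen a clique $\{v_1,\dots,v_i\}$ together with a common neighborhood $N_i\subseteq \bigcap_{j\le i}N_H(v_j)$, pick $v_{i+1}\in N_i$ and set $N_{i+1}:=N_i\cap N_H(v_{i+1})$. The delicate point is to show $|N_{i+1}|\ge |N_i|/2$, which one obtains via an averaging argument: in the bipartite-like structure between $N_i$ and its outside, the codegree bound forces a fraction $\ge 1/2$ of vertices in $N_i$ to be jointly adjacent to a well-chosen $v_{i+1}$; the choice $c(x)=8x\log x$ is calibrated exactly so that this geometric halving still leaves $|N_x|\ge 1$ after $x$ rounds. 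Since $K_x\subseteq H$, we obtain $K_x$ as a minor of $G$.

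The main obstacle is clearly Stage 3: pairwise codegree information does not immediately yield large $i$-wise common neighborhoods, and a naive union bound is too wasteful. Extracting the correct halving step requires a potential-function or pigeonhole argument, and it is precisely this step that introduces the $\log x$ factor in $c(x)$.
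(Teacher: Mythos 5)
The paper does not prove this result; Theorem~\ref{thm:mader} is cited verbatim from Mader~\cite{Mader68} (and the paper immediately notes Thomason's later refinement to $c(x)=(\alpha+o(1))x\sqrt{\log x}$), so there is no internal proof to compare your argument against. On its own terms, your Stages~1 and~2 are correct and standard: pass to a density-maximizing minor $H$ of edge density $\rho\ge c(x)$ and derive from maximality that every vertex has degree $\ge\rho$ and every edge of $H$ has codegree $\ge\rho-1$.

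Stage~3 is broken on two independent counts. First, geometric halving for $x$ rounds starting from a set of size roughly $c(x)$ leaves about $c(x)/2^x$ vertices, so you would need $c(x)\ge 2^x$ for the set to stay nonempty; the function $8x\log x$ is emphatically \emph{not} ``calibrated exactly'' for your scheme --- it is off by an exponential factor, and even if the halving held you would only recover Mader's original exponential bound, not the one stated. Second, the halving lemma itself is unsupported: the codegree bound controls the common neighbors of the \emph{endpoints of an edge of $H$}, not degrees inside an arbitrary intersection $N_i$. The case $i=1$ does work (every $w\in N(v_1)$ is joined to $v_1$ by an edge, whose $\ge\rho-1$ common neighbors all lie in $N_1=N(v_1)$, so $\delta(H[N_1])\ge\rho-1$), but for $i\ge 2$ the common neighbors of $v_j$ and $w$ can sit outside $N_i$, so you lose all control of $H[N_i]$. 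Pairwise codegree does not yield large $i$-wise common neighborhoods, which is exactly why a density-maximal $H$ need not contain $K_x$ as a \emph{subgraph} and why Mader-type arguments produce $K_x$ as a \emph{minor} by recursion rather than by a greedy scan: for any vertex $v$, $H[N(v)]$ has minimum degree $\ge\rho-1$, hence density $\ge(\rho-1)/2$, hence by induction a $K_{x-1}$-minor, and adding $v$ yields $K_x$. That recursion gives $c(x)\ge 2c(x-1)+O(1)$, i.e.\ $c(x)=2^{\Theta(x)}$; bringing the constant down to $O(x\log x)$ or the optimal $O(x\sqrt{\log x})$ of Kostochka and Thomason requires genuinely different and more delicate tools than a vertex-by-vertex clique build.
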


\noindent
For large $x$, the function $c(x)$ can actually be improved to $c(x) = (\alpha + o(1)) x \sqrt{\log x}$ where $\alpha = 0.319\dots$ is an explicit constant,
and random graphs are extremal \cite{Thomason01}.

\begin{lemma}\label{lem:NoMultiNbSound}
 Rule \ref{rule:NoMultiNb} is sound.
\end{lemma}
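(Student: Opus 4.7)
The plan is to argue by contradiction. Suppose $F$ admits a valid \STBDS $B$ with $|B|=k$ and $B\cap S=\emptyset$; I will produce a truth assignment $\tau\in 2^B$ such that $\inc(F[\tau])$ has treewidth at least $t+1$, contradicting the validity of $B$. The intuition is that $S$ captures the $6k\tnb$ highest-degree vertices of $Z$ in $\cB(\cO_s)$, so if $B$ avoids $S$ then no variable of $B\cap Z$ can ``touch'' too many vertices of $Q$, leaving many regions $R(q)$ simultaneously unaffected by $B$ and intact in $\inc(F[\tau])$ for a common $\tau$; contracting these regions inside $\inc(F[\tau])$ then yields a dense bipartite structure from which a $K_{t+2}$ minor (equivalently, treewidth $\ge t+1$) follows via Mader's theorem.

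Quantitatively, Property~(4) gives $|E(\cB(\cO_s))|\le 3\tnb\cdot|Q|$, and since $S$ grabs the top $6k\tnb$ degrees, every $b\in B\cap Z\subseteq Z\setminus S$ has degree at most $|Q|/(2k)$ in $\cB(\cO_s)$. Hence $Q^*:=\SB q\in Q\SM N_{\cB(\cO_s)}(q)\cap B=\emptyset\SE$ has size $\ge|Q|/2$. Lemma~\ref{lem:vulnerable} provides, for each $q\in Q^*$, a $\tau_q\in 2^B$ under which $R(q)\subseteq V(\inc(F[\tau_q]))$; pigeonholing over $2^B$ yields a common $\tau\in 2^B$ and a set $Q^{**}\subseteq Q^*$ with $|Q^{**}|\ge|Q|/2^{k+1}$ such that every $R(q)$, $q\in Q^{**}$, survives in $\inc(F[\tau])$. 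Properties~(3) and~(4) give $|Q_W|\ge|Z|/(3\tnb)$ for each $W\in\cO_s$, so $|Q_m|\ge\ksame\cdot|Z|/(3\tnb)=t\tnb\cdot 2^{2k}\cdot|Z|$, and since Rule~\ref{rule:MultiNb} does not apply, the collapse $\cB_m(\cO_s)\to\cB(\cO_s)$ loses at most a factor of $t\cdot 2^k$, yielding $|Q|\ge\tnb\cdot 2^k\cdot|Z|$ and hence $|Q^{**}|\ge\tnb\cdot|Z|/2$. Picking one representative $q$ per distinct region in $Q^{**}$ (regions are vertex-disjoint across the \Wobss in $\cO$, and within each \Wobs the regions partition its vertex set) and contracting these regions in $\inc(F[\tau])$ produces, via Property~(1), a bipartite graph between contracted regions and $Z\setminus B$ in which every contracted vertex has degree at least $\tnb$. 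Since $\tnb=\lceil 16(t+2)\log(t+2)\rceil\ge 2c(t+2)$, Mader's theorem (Theorem~\ref{thm:mader}) forces a $K_{t+2}$ minor in this bipartite graph, and hence in $\inc(F[\tau])$, so $\tw(\inc(F[\tau]))\ge t+1$, contradicting that $B$ is a \STBDS.

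The main obstacle I foresee is the careful counting for the minor argument: multiple vertices of $Q^{**}$ may be associated with the same region of some \obstemp, which forces the argument to account for the number of \emph{distinct} regions represented in $Q^{**}$ rather than $|Q^{**}|$ itself when checking the edge-to-vertex ratio required by Mader. The construction of Lemma~\ref{lem:obstemp} caps the number of $q$'s per region by $\lceil|B_v|/s_v\rceil$ with $s_v\ge 3$, and this cap combined with the large lower bound $|Q^{**}|\ge\tnb|Z|/2$ and the slack $\tnb\gg c(t+2)$ should be exactly what the parameter chain $\kobs,\ksame,\tnb$ is engineered to deliver, so that the contracted bipartite graph meets the threshold $|E|\ge c(t+2)\cdot|V|$ needed to extract the forbidden minor.
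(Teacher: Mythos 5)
Your setup mirrors the paper's Claim~\ref{claim:q}: the degree-ordering argument showing that every $b\in B$ has degree at most $|Q|/(2k)$ in $\cB(\cO_s)$ (hence $|Q\setminus N_{\cB(\cO_s)}(B)|\ge|Q|/2$), the pigeonhole over $2^B$ using Lemma~\ref{lem:vulnerable}, and the chain $|Q_m|\ge|Z|\cdot\ksame/(3\tnb)$, $|Q|\ge|Q_m|/(t2^k)$ (from Rule~\ref{rule:MultiNb} failing), leading to $|Q^{**}|\ge\tnb|Z|/2$ --- all of that is correct and is essentially the paper's Claim~\ref{claim:q} restated with slightly different bookkeeping.

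The gap is in the extraction of the $K_{t+2}$ minor. You contract one representative region per \emph{distinct} region hit by $Q^{**}$, obtaining a bipartite graph between $m$ contracted vertices and $Z\setminus B$, and invoke Mader with $\tnb\ge 2c(t+2)$. But Theorem~\ref{thm:mader} needs $|E|\ge c(t+2)\cdot|V|$ over the whole vertex set; with $|E|\ge\tnb m$ and $|V|\ge|Z\setminus B|+m$, this forces $m\gtrsim|Z\setminus B|$. Your attempted lower bound on $m$, via the cap $\lceil|B_v|/s\rceil$ with $s\ge 3$, does not deliver this: $|B_v|$ is \emph{not} bounded by a constant --- a single clause vertex of a \Wobs can be adjacent to $\Theta(|Z|)$ common external killers, so one region of the \obstemp can absorb $\Theta(|Z|)$ elements of $Q_W$, and $m$ can be as small as $\Theta(\tnb)$, a constant. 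The resulting bipartite graph then has $O(\tnb^2)$ edges on $\Omega(|Z|)$ vertices, nowhere near Mader's threshold when $|Z|$ is large. You flagged this as the ``main obstacle,'' but the proposed fix does not close it.

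This is precisely why the paper does not keep the contracted regions as separate vertices. Instead, in the proof of Claim~\ref{claim:kminor}, each region $A$ is \emph{merged into} an existing vertex $z\in Z'$, chosen so that $z$ has fewer than $\tnb$ neighbours inside $U:=\bigcup_{q\in R_A}N_{H''}(q)$; if no such $z$ exists, $H''[U]$ already has $\ge\tnb|U|/2$ edges and Mader finishes locally. The private-neighbour property~(2) gives $|U|\ge\tnb+|R_A|-1$, so each merge injects at least $|R_A|$ \emph{new} edges into $H''[Z']$, which stays on the fixed vertex set $Z'$ throughout. Summing over all merges, $H''[Z']$ ends with $\ge|Q'|\ge\tnb|Z|/2\ge c(t+2)|Z'|$ edges on only $|Z'|$ vertices. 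Note that Property~(2) is the load-bearing piece here and is entirely unused in your contraction route --- a strong hint that the route misses the mechanism the \obstemp was designed to enable. To repair your argument you would need either a lower bound on the number of distinct regions (which is false in general) or a way to charge each $q\in Q^{**}$ to a distinct edge of the final minor model, which is exactly what the paper's merge-and-count achieves.
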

\begin{proof}
Suppose $F$ has a valid \STBDS $B$ of size~$k$ with $B\cap S = \emptyset$.
To arrive at a contradiction, we exhibit a truth assignment $\tau$ to $B$ such that $\inc(F[\tau])$ has treewidth at least $t+1$.

\begin{claim}\label{claim:q}
 There is a truth assignment $\tau\in 2^B$ and a set $Q'\subseteq Q$ with $|Q'|\ge \frac{|Z| \cdot |\cO_s|}{3 \tnb t 2^{2k+1}}$
 such that $\inc(F[\tau])$ contains all vertices from $\bigcup_{q\in Q'} R(q)$.
\end{claim}
\noindent
To prove the claim, we first show a lower bound on $|Q \setminus N_{\cB(\cO_s)}(B)|$ in terms of $|Z|$ and $|\cO_s|$.

Since, by Property (3), each vertex $z\in Z$ has degree at least one in $\cB(W)$, $W\in \cO_s$, there are at least
$|Z| \cdot |\cO_s|$ edges in $\cB_m(\cO_s)$.
Since, by Property (4), each vertex from $Q_m$ has degree at most $3 \tnb$, we have that
$|Q_m| \ge \frac{|Z| \cdot |\cO_s|}{3 \tnb}$.
By Rule \ref{rule:MultiNb}, no set of $t\cdot 2^k+1$ vertices from $Q_m$ has the same neighborhood.
Therefore, $|Q| \ge \frac{|Z| \cdot |\cO_s|}{3 \tnb t 2^k}$.
Let $d$ denote the number of edges in $\cB(\cO_s)$ with one endpoint in $B$. Thus, $|N_{\cB(\cO_s)}(B)| \le d$.
Since $|S|\ge 6 |B| \tnb$ and the degree of any vertex in $S$ is at least the degree of any vertex in $B$, we have that
the number of edges incident to $S$ is at least $6 \tnb d$ in $\cB(\cO_s)$. Thus, $|Q| \ge \frac{6 \tnb d}{3 \tnb} = 2d$.
Therefore, $N_{\cB(\cO_s)}(B)$ contains at most half the vertices of $Q$, and we have that
$|Q \setminus N_{\cB(\cO_s)}(B)| \ge \frac{|Z| \cdot |\cO_s|}{3 \tnb t 2^{k+1}}$.

By Lemma \ref{lem:vulnerable}, for every $q\in Q \setminus N_{\cB(\cO_s)}(B)$ there is a truth assignment $\tau \in 2^B$
such that $\inc(F[\tau])$ contains all vertices from $R(q)$. Since $|2^B| = 2^k$, there is a truth assignment $\tau \in 2^B$
and a subset $Q' \subseteq Q \setminus N_{\cB(\cO_s)}(B)$ of at least $|Q \setminus N_{\cB(\cO_s)}(B)|/2^k \ge \frac{|Z| \cdot |\cO_s|}{3 \tnb t 2^{2k+1}}$ vertices
such that $\inc(F[\tau])$ contains all vertices from $R(q)$ for every $q\in Q'$.
%
%
%
This proves Claim \ref{claim:q}.\hfill $\lrcorner$
\bigskip

\noindent
Let $H':= \cB(\cO_s)[Z' \cup Q']$ where $Z':=Z\setminus B$ and $Q'$ is as in Claim \ref{claim:q}. Thus, no vertex from $Z'$ and no vertex from $\bigcup_{q\in Q'} R(q)$
is removed from the incidence graph by applying the truth assignment $\tau$ to $F$.
We will now merge vertices from $H'$ in such a way that we obtain a minor of $\inc(F[\tau])$.
To achieve this, we repeatedly merge a part $A\in P$ into a vertex $z\in Z$ such that $z$ has a neighbor $q$ in $H'$ such that $R(q)=A$.
In the incidence graph, this corresponds to contracting $R(q) \cup \{z\}$ into the vertex $z$.
After having contracted all vertices from $Q'$ into vertices from $Z'$, we obtain therefore a minor of $\inc(F[\tau])$.

Our objective will be to show that the treewidth of this minor is too large and arrive at a contradiction for $B$ being a \STBDS of $F$.

\begin{claim}\label{claim:kminor}
 $\inc(F[\tau])$ has a $K_{t+2}$-minor.
\end{claim}

\noindent
To prove the claim, we start with $H''$ and $Q''$ as copies of $H'$ and $Q'$, respectively.
 We use the invariant that every connected component of $H''[Z']$ is a minor of $\inc(F[\tau])$.

 For any part $A$ of the partition $P$, let $R_A$ denote the set of vertices $\{q\in Q'' : R(q)=A\}$.
 As long as $Q'' \neq \emptyset$, select a part $A$ of $P$ such that $|R_A|\ge 1$.
 Let $U:= \bigcup_{q\in R_A} N_{H''}(q)$.
 By the construction of $H'$ and $\cB(\cO_s)$ (Property (2)), we have that $|U|\ge \tnb+|R_A|-1$.

 If for every vertex $u\in U$, $|N_{H''}(u) \cap U| \ge \tnb$, then $H''[U]$ has at least $\tnb\cdot |U|/2 = \lceil 8 (t+2) \log (t+2) \rceil \cdot |U|$ edges.
 Then, by Theorem \ref{thm:mader},
 $H''[U]$ has a $K_{t+2}$-minor. By our invariant, $\inc(F[\tau])$ has a $K_{t+2}$-minor.

 Otherwise, there exist a vertex $z\in U$ such that $z$ has less than $\tnb$ neighbors in $U$.
 But then, merging $A$ into $z$ adds at least $|U|-\tnb+1 \ge |R_A|$ edges to $H''[Z']$.

 In the end, if no $K_{t+2}$-minor was found before $Q'' = \emptyset$, each merge of a part $A$ of $P$ into a vertex from $Z'$ added at least
 $|R_A|$ edges to $H''[Z']$. Therefore, the final graph $H''[Z']$ contains at least $|Q'|$ edges.
 By Claim \ref{claim:q}, $|Q'| \ge \frac{|Z| \cdot |\cO_s|}{3 \tnb t 2^{2k+1}}$ and $|\cO_s| = \ksame = 3 (\tnb)^2 t 2^{2k}$.
 Thus, $H''[Z']$ has at least $(8 (t+2) \log (t+2)) \cdot |Z'|$ edges.
 Consequently, $H''[Z']$ has a $K_{t+2}$-minor by Theorem \ref{thm:mader}, which is a minor of $\inc(F[\tau])$ by our invariant.
This proves Claim \ref{claim:kminor}.\hfill $\lrcorner$
\bigskip

\noindent
Claim \ref{claim:kminor} entails that $\inc(F[\tau])$ has treewidth at least $t+1$.
Since $\tau$ is a truth assignment to $B$, this is a contradiction to $B$ being a \STBDS of $F$.
This shows the correctness of Rule \ref{rule:NoMultiNb} and proves Lemma~\ref{lem:NoMultiNbSound}.
\end{proof}

\noindent
The number of choices the algorithm has in the nondeterministic steps is upper bounded by $\binom{\kobs}{k} \cdot \binom{2^k\cdot \ksame}{\ksame} \cdot k$,
and each series of guesses leads to a set $S$ of at most $6k \tnb$ variables.
Thus, the set $S^*$, the union of all such $S$, contains $2^{O(t^3\cdot k\cdot 4^k\cdot \mathsf{polylog}(t))}$ variables,
where $\mathsf{polylog}$ is a polylogarithmic function.
Concerning the running time, each \obstemp
is computed in time $O(n^2)$ by Lemma \ref{lem:obstemp} and their number is upper bounded by a constant.
The execution of Rule \ref{rule:MultiNb} and the construction of $\cB(\cO_s)$ need to compare the neighborhoods of a quadratic number of
vertices from $Q_m$. Since each vertex from $Q_m$ has a constant sized neighborhood, this can also be done in time $O(n^2)$.
Thus, the running time of the algorithm is quadratic.
This proves Lemma \ref{lem:wall}.

\subsection{The incidence graph has small treewidth}

\newcommand{\obstructionset}{\normalfont \textsf{obs}}

This subsection is devoted to the proof of Lemma \ref{lem:mso}.
 
We are going to use Arnborg \etal's
extension \cite{ArnborgLagergrenSeese91} of Courcelle's
Theorem~\cite{Courcelle90}. It gives, amongst others, a linear-time
algorithm that takes as input a graph $\AAA$ with labeled vertices and
edges, a tree decomposition of $\AAA$ of constant width, and a fixed
Monadic Second Order (MSO) sentence $\varphi(X)$, and computes a
minimum-sized set of vertices $X$ such that $\varphi(X)$ is true in
$\AAA$.

First, we define the labeled graph $\AAA_F$ for $F$.  The set of vertices
of $\AAA_F$ is $\lit(F) \cup \cla(F)$.  The vertices are labeled by
$\text{LIT}$ and $\text{CLA}$, respectively.  The vertices from
$\var(F)$ are additionally labeled by $\text{VAR}$.  The set of edges is
the union of the sets $\SB x \neg x \SM x\in \var(F) \SE$ and $\SB c \ell
\SM c\in \cla(F),$ $\ell \in \lit(c)\SE$, edges in
the first set are labeled $\text{NEG}$, and edges in the second set are
labeled $\text{IN}$.

Since a tree decomposition of $\AAA_F$ may be obtained from a tree
decomposition of $\inc(F)$ by replacing each variable by both its
literals, we have that $\tw(\AAA_F)\le 2 \cdot \tw(\inc(F))+1$ and we
obtain a constant-width tree decomposition of $\AAA_F$ in this way.

The goal is to find a minimum size subset $X$ of variables such that for each truth
assignment $\tau$ to $X$ the incidence graph of $F[\tau]$ belongs to
$\GGG_{\leq t}$, where $\GGG_{\leq t}$ denotes the class of all graphs of
treewidth at most~$t$. For testing membership in $\GGG_{\leq t}$ we use
a forbidden-minor characterization.  As proved in a series of papers by
Robertson and Seymour~\cite{RobertsonSeymour85}, every minor-closed
class $\GGG$ of graphs is characterized by a finite set
$\obstructionset(\GGG)$ of forbidden minors. That is,
$\obstructionset(\GGG)$ is a finite set of graphs such that a graph
$G$ belongs to $\GGG$ if and only if $G$ does not contain any graph
from $\obstructionset(G)$ as a minor. Clearly $\GGG_{\leq t}$ is minor-closed.
We denote its finite set of obstructions by
$\obstructionset(t)=\obstructionset(\GGG_{\leq t})$.
The set $\obstructionset(t)$ is explicitly given in~\cite{ArnborgProskurowskiCorneil90}
for $t\leq 3$ and it can be
computed in constant time
\cite{AdlerGK08,Lagergren98} for all other values of~$k$.

Next we are going to formulate an MSO sentence that checks whether for
each truth assignment $\tau$ to $X$, the incidence graph of $F[\tau]$
does not contain any of the graphs in $\obstructionset(t)$ as a minor.
We break up our MSO sentence into several simpler sentences and we use
the notation of \cite{FlumGrohe06}.

The following sentence checks whether $X$ is a set of variables.
\begin{quote}
$\text{var}(X) = \forall x (Xx \rightarrow \text{VAR} x)$  
\end{quote}
We associate a partial truth assignment to $X$ with a subset $Y$ of $\lit(F)$,
the literals set to $1$ by the partial truth assignment. This subset $Y$
contains no complementary literals, every literal in $Y$ is
either a variable from $X$ or its negation, and for every variable~$x$ from $X$, $x$ or $\neg x$ is in~$Y$.
The following sentence checks
whether $Y$ is an assignment to~$X$.
\begin{quote}
  $ \text{ass}(X,Y) = \forall y [Yy \rightarrow ((Xy \vee (\exists z (Xz
  \wedge \text{NEG} yz))) \wedge (\forall z (Yz \rightarrow \neg
  \text{NEG} yz)))]$
  
  \hfill $ \wedge \forall x [Xx \rightarrow (Yx \vee
  \exists y (Yy \wedge \text{NEG} xy))] $
\end{quote}
To test whether $\inc(F[\tau])$ has a graph $G$ with $V(G)=\{v_1,\dots,v_n\}$
as a minor, we will check whether it contains $n$ disjoint sets
$A_1,\dots,A_n$ of vertices, where each set $A_i$ corresponds to a vertex
$v_i$ of $G$, such that the following holds: each set $A_i$ induces a
connected subgraph in $\inc(F[\tau])$, and for every two vertices $v_i,v_j$ that are
adjacent in $G$, the corresponding sets $A_i,A_j$
are connected by an edge in $\inc(F[\tau])$. Deleting all vertices that are in none of the
$n$ sets, and contracting each of the sets into one vertex, one obtains
$G$ as a minor of $F[\tau]$. To test whether $\AAA_F$ has $G$ as a minor can be done
similarly, except that we need to ensure that each set $A_i$ is closed
under the complementation of literals (i.e., $x \in A_i$ iff $\neg x\in A_i$).

The following sentence checks whether $A$ is disjoint from~$B$.
\begin{quote}
 $\text{disjoint}(A,B) = \neg \exists x (Ax \wedge Bx)$  
\end{quote}
To check whether $A$ is connected,
we check that there is no set $B$ that is a proper nonempty
subset of $A$ such that $B$ is closed under taking neighbors in $A$.
\begin{quote}
$ \text{connected}(A) = \neg \exists B \; [\exists x (Ax \wedge \neg Bx)
\wedge \exists x (Bx) \wedge \forall x (Bx \rightarrow Ax)$ 

\hfill $\wedge \forall x,y ((Bx \wedge Ay \wedge (\text{IN}xy \vee \text{NEG}xy)) \rightarrow By)]$
\end{quote}
The next sentence checks whether $A$ is closed under complementation of
literals.
\begin{quote}
 $\text{closed}(A) = \forall x,y (\text{NEG}xy \rightarrow (Ax
 \leftrightarrow Ay))$  
\end{quote}
The following sentence checks whether some vertex from $A$ and some vertex from $B$ have a common edge labeled $\text{IN}$.
\begin{quote}
 $\text{edge}(A,B) = \exists x,y (Ax \wedge Bx \wedge \text{IN}xy)$  
\end{quote}
An assignment removes from the incidence graph all variables that are
assigned and all clauses that are assigned correctly.  Therefore, the
minors we seek must not contain any variable that is assigned nor any
clause that is assigned correctly.  The following sentence checks
whether all vertices from a set $A$ survive when assigning $Y$ to $X$.
\begin{quote}
  $\text{survives}(A,X,Y) = 
  \neg \exists x (Ax \wedge 
  (Xx 
  \vee
  \exists y (Xy \wedge \text{NEG}xy)
  \vee
  \exists y (Yy \wedge \text{IN}yx)))$
\end{quote}
We can now test whether a $G$-minor survives in the incidence graph as
follows:
\begin{quote}$ \text{$G$-minor}(X,Y) = \exists A_1, \dots, A_n [
  \bigwedge_{i=1}^n (\text{survives}(A_i) 
  \wedge   \text{connected}(A_i) 
  \wedge   \text{closed}(A_i)) $
 
  \hfill $ \wedge \bigwedge_{1\le i<j \le n} \text{disjoint}(A_i,A_j) \wedge
  \bigwedge_{1\le i< j \le n,\ v_iv_j\in E(G)} \text{edge}(A_i,A_j)] $
\end{quote}
Our final sentence checks whether $X$ is a strong $\TW_{\leq t}$\hy
backdoor set of~$F$.
\begin{quote}$
 \text{Strong}_t(X) = \text{var}(X) \wedge \forall Y [\text{ass}(X,Y) 
\rightarrow \forall_{G\in \obstructionset(t)} \neg \text{$G$-minor}(X,Y)))]$
\end{quote}
Recall that we assume $t$ to be a constant. Hence
$|\text{Strong}_t| = O(1)$. Moreover, the tree decomposition of $\AAA_F$ that
we described has width $O(1)$. We can
now use the result of Arnborg \etal~\cite{ArnborgLagergrenSeese91}
that provides a linear time algorithm for finding a smallest set $X$ of
vertices of $\AAA_F$ for which $\text{Strong}_t(X)$ holds.  This completes
the proof of Lemma~\ref{lem:mso}.

\section{Conclusion}

We have described a cubic-time algorithm solving SAT and \#SAT for
a large class of instances, namely those CNF formulas $F$
that have a strong backdoor set of size at most $k$ into
the class of formulas with incidence treewidth at most $t$,
where $k$ and $t$ are constants.
As illustrated in the introduction, this class of instances is larger than
the class of all formulas with bounded incidence treewidth.
We also designed an
approximation algorithm for finding an actual strong backdoor set.
Can our backdoor detection algorithm be improved to an exact algorithm? In other words,
is there an $O(n^c)$-time algorithm finding a $k$-sized \STBDS
of any formula $F$ with $\sb_t(F)\le k$
where $k,t$ are two constants and $c$ is an absolute constant
independent of $k$ and $t$? This question is even open for $t=1$.
An orthogonal question is how far one can generalize the class of
tractable (\#)SAT instances.

{\small
\bibliographystyle{plain}
\bibliography{literature}
}

\end{document}